\documentclass[11pt]{scrartcl}
\usepackage[a4paper, total={16cm, 24cm}]{geometry}

\usepackage{multicol}
\usepackage{tocloft}

\usepackage[small]{caption}
\usepackage{subcaption}

\usepackage{algorithm}
\usepackage[table,dvipsnames]{xcolor}
\usepackage{todonotes}
\usepackage{pdflscape}
\usepackage{adjustbox}

\usepackage{natbib}
\usepackage[pagebackref]{hyperref}
\hypersetup{
	pdfencoding=auto,
	psdextra,
	colorlinks=true,
	citecolor=green!50!black,
	linkcolor=red!60!black,
	urlcolor=purple!70!black
}

\renewcommand{\epsilon}{\varepsilon}

\usepackage{authblk}

\usepackage{url}
\usepackage{microtype}
\usepackage[utf8]{inputenc}
\usepackage{graphicx}
\usepackage{amsmath}
\usepackage{amsthm}
\usepackage{amssymb}
\usepackage{mathtools}
\usepackage{nicefrac}
\usepackage{booktabs}
\usepackage{enumitem}
\usepackage{tabulary}
\usepackage{tabularx}
\usepackage{multirow}

\usepackage{wrapfig}

\usepackage[nameinlink]{cleveref}
\usepackage{stmaryrd}
\usepackage{thm-restate}
\usepackage{tikz}
\usetikzlibrary{arrows.meta}
\usepackage{tcolorbox}

\newtheoremstyle{sfthm}
	{\topsep}
	{\topsep}
	{\itshape}
	{}
	{\sffamily\bfseries}
	{}
	{.5em}
	{}
\theoremstyle{sfthm}
\newtheorem{theorem}{Theorem}[section]
\newtheorem{lemma}[theorem]{Lemma}
\newtheorem{proposition}[theorem]{Proposition}
\newtheorem{corollary}[theorem]{Corollary}

\newtheoremstyle{sfdef}
{\topsep}
{\topsep}
{}
{}
{\sffamily\bfseries}
{}
{.5em}
{}
\theoremstyle{sfdef}

\newcommand{\chiStable}{\chi_{\mathrm{stable}}}

\newcommand{\tw}{\operatorname{tw}}

\title{Graph Coloring with Color Preferences}

\author{Tomohiro Koana}
\author{Yeeseok Oh}
\author{Hirotaka Yoneda}
\affil{The University of Tokyo}
\date{\vspace{-3em}}

\begin{document}

\maketitle

\begin{abstract}
\begin{center}
	\textbf{\textsf{Abstract}} \smallskip
\end{center}
We study graph coloring with color preferences, in which each vertex ranks the available colors. In addition to assigning different colors to adjacent vertices, we require the coloring to be \emph{stable}: no group of vertices can cyclically exchange their assigned colors so that each strictly prefers its new color to its original one. We define the \emph{stable chromatic number} $\chiStable(G)$ of a graph $G$ as the minimum integer $k$ such that every preference profile admits a stable $k$-coloring of $G$.
We establish several upper and lower bounds. In particular, for any acyclic orientation of the edges of $G$, the largest number of vertices reachable from a vertex by directed paths, including the vertex itself, is an upper bound on $\chiStable(G)$. This shows that $\chiStable(G)$ is well-defined. We also show that $O(t \log (1+n/t))$ colors suffice for an $n$-vertex graph $G$ of treewidth $t$, and complement this with a lower bound in terms of the Grundy number.
Turning to the problem of finding a minimum stable coloring for a given profile, we show that stable $2$-colorability is polynomial-time solvable, whereas stable $k$-colorability is NP-complete for every fixed $k\ge 3$. Using the treewidth bound, we give a fixed-parameter tractable algorithm parameterized by treewidth.
\end{abstract}

\section{Introduction}
Graph coloring models the problem of assigning reusable resources under pairwise incompatibilities: vertices are recipients, colors are resource types, and adjacent vertices must receive different colors \citep{jensen1995graph}. Classical graph coloring does not distinguish between colors and asks only for feasibility. In many allocation settings, however, vertices represent agents who have strict preferences over the available colors.
We study this preference-aware setting under a new stability requirement: there is no cycle along which every agent prefers the color assigned to the next agent.

To illustrate, consider designing a take-home exam for a course. You announce that the exam is to be solved individually, but you are aware that some students close to each other might work on it together. You therefore decide to make multiple versions of the exam so that close friends receive different versions. To minimize the workload, you want to make as few versions as possible.
When students are modeled as vertices, two students are adjacent if they might work together. A proper coloring of this graph represents a feasible assignment of exam versions.

Then a problem arises because students may rank the versions differently based on their strengths in different topics, familiarity with the material, or perceptions of the versions' difficulty. Once the packets have been distributed, students connected through the collaboration network may also exchange them privately.
For example, \Cref{fig:exam-before-after} shows such a situation: students on a cycle may circulate their packets so that each receives the version initially held by the next student. If every participant strictly prefers the received version, this is an individually profitable coalition deviation.

Such a deviation may undermine the instructor's assignment. In particular, a student may acquire the same version as a collaborator outside the deviating coalition, thereby violating properness. Even when properness is preserved, an unauthorized exchange can undermine the intended distribution of assessment variants. We call a directed cycle along which every student prefers the version assigned to the next student a \emph{blocking cycle}, and we seek a proper assignment with no such cycle. We call such a coloring \emph{stable}. Preparing and validating many distinct versions is costly, which motivates minimizing the number of colors.

In this paper, we focus on the following two basic questions: How many colors suffice to stably color a graph? What is the complexity of finding such a coloring?

\begin{figure*}[t]
\centering
\resizebox{\textwidth}{!}{%
\begin{tikzpicture}[
    conflict/.style={draw=black, thick},
    envy/.style={->, dashed, draw=red!75!black, very thick},
    student/.style={
        circle,
        draw=black,
        thick,
        minimum size=8mm,
        inner sep=0pt,
        font=\normalsize
    },
    versionA/.style={student, fill=red!55},
    versionB/.style={student, fill=blue!40},
    versionC/.style={student, fill=green!25},
    pref/.style={
        font=\normalsize,
        fill=white,
        inner sep=1pt,
        text depth=0pt
    },
    paneltitle/.style={font=\normalsize\bfseries}
]

\begin{scope}
    \node[paneltitle] at (0.9,3.25)
        {Initial proper assignment};

    \node[versionA] (a1) at (0,2.0) {$a$};
    \node[versionB] (b1) at (-1.8,0) {$b$};
    \node[versionC] (c1) at (1.8,0) {$c$};
    \node[versionB] (x1) at (3.6,2.0) {$x$};

    \draw[conflict] (a1) -- (b1);
    \draw[conflict] (b1) -- (c1);
    \draw[conflict] (c1) -- (a1);
    \draw[conflict] (a1) -- (x1);

    \node[pref] at (0,2.65)
        {$\mathsf{B}\succ\mathsf{A}\succ\mathsf{C}$};

    \node[pref] at (-1.8,-0.65)
        {$\mathsf{C}\succ\mathsf{B}\succ\mathsf{A}$};

    \node[pref] at (1.8,-0.65)
        {$\mathsf{A}\succ\mathsf{C}\succ\mathsf{B}$};

    \node[pref] at (3.6,2.65)
        {$\mathsf{B}\succ\mathsf{A}\succ\mathsf{C}$};

    \draw[envy, bend left=18] (a1) to (b1);
    \draw[envy, bend left=18] (b1) to (c1);
    \draw[envy, bend left=18] (c1) to (a1);

\end{scope}

\draw[->, ultra thick] (4.75,1.0) -- (6.25,1.0);

\node[font=\normalsize, align=center] at (5.5,1.75)
    {exchanges along\\
     the cycle\\
     $a\to b\to c\to a$};

\begin{scope}[xshift=8cm]
    \node[paneltitle] at (0.9,3.25)
        {After the exchange};

    \node[versionB] (a2) at (0,2.0) {$a$};
    \node[versionC] (b2) at (-1.8,0) {$b$};
    \node[versionA] (c2) at (1.8,0) {$c$};
    \node[versionB] (x2) at (3.6,2.0) {$x$};

    \draw[conflict] (a2) -- (b2);
    \draw[conflict] (b2) -- (c2);
    \draw[conflict] (c2) -- (a2);

    \draw[draw=red!75!black, ultra thick] (a2) -- (x2);

\end{scope}

\begin{scope}[xshift=13.0cm, yshift=1.0cm]
    \node[font=\normalsize\bfseries, anchor=west] at (-0.35,1.6)
        {Exam versions};

    \node[versionA, minimum size=5mm] at (0,0.9) {};
    \node[font=\normalsize, anchor=west] at (0.4,0.9)
        {Version $\mathsf{A}$};

    \node[versionB, minimum size=5mm] at (0,0.2) {};
    \node[font=\normalsize, anchor=west] at (0.4,0.2)
        {Version $\mathsf{B}$};

    \node[versionC, minimum size=5mm] at (0,-0.5) {};
    \node[font=\normalsize, anchor=west] at (0.4,-0.5)
        {Version $\mathsf{C}$};
\end{scope}
\end{tikzpicture}%
}
\caption{
Exchanging versions along the blocking cycle $a\to b\to c\to a$ benefits all participants but makes the initially proper assignment improper.
}
\label{fig:exam-before-after}
\end{figure*}

\paragraph{Our Contributions}
We introduce the \emph{stable chromatic number} $\chiStable(G)$ of a graph $G$, the minimum integer $k$ such that every preference profile on $G$ admits a stable $k$-coloring. In \Cref{sec:bounds}, we bound this invariant from above and below.
For any acyclic orientation of the edges of $G$, the largest number of vertices reachable from a vertex by directed paths, including the vertex itself, is an upper bound on $\chiStable(G)$. In particular, $\chiStable(G)\le |V(G)|$, so the stable chromatic number is well-defined. For the lower bound, we use the Grundy number $\Gamma(G)$, which is the maximum number of colors that First-Fit can use over all
vertex orderings. From a Grundy coloring with $r$ colors, we construct a preference profile that admits no stable $(r-1)$-coloring, proving that $\Gamma(G)\le\chiStable(G)$.
Since every stable coloring is proper, $\chi(G)$ is a natural lower bound on $\chiStable(G)$. We therefore focus mainly on graph classes of bounded chromatic number and ask whether their stable chromatic number is also bounded. We determine the stable chromatic number exactly for complete bipartite graphs, paths, and cycles. \Cref{tab:bounds-summary} summarizes
these bounds.

In \Cref{sec:algorithms}, we turn to the complexity of finding a stable coloring with the minimum number of colors for a graph $G$ and a preference profile $\succ$. We show that stable $2$-colorability is polynomial-time solvable, while stable $k$-colorability is NP-complete for every fixed $k\ge 3$.
In view of this NP-hardness, we seek fixed-parameter tractable algorithms.
An algorithm is \emph{fixed-parameter tractable} (FPT) with respect to a parameter $t$ if it runs in time $f(t)N^{O(1)}$ on instances of size $N$, for some computable function $f$.
We focus on treewidth, arguably the most extensively studied structural graph parameter, which measures how closely a graph resembles a tree.
We show that $O(t\log n)$ colors suffice for every $n$-vertex graph of treewidth $t$. Using this bound, we design a fixed-parameter tractable algorithm parameterized by treewidth. In particular, a stable coloring of $(G,\succ)$ using the minimum possible number of colors can be found in polynomial time when $G$ is a tree.

\begin{table*}[t]
    \centering
    \begin{tabularx}{\textwidth}{@{}
        >{\raggedright\arraybackslash}p{0.38\textwidth}
        >{\raggedright\arraybackslash}p{0.12\textwidth}
        >{\raggedright\arraybackslash}X
        >{\raggedright\arraybackslash}X
        @{}}
        \toprule
        \multirow{2}{*}{Graph Class} & \multirow{2}{*}{$\chi(G)$}
            & \multicolumn{2}{c}{Stable Chromatic Number $\chiStable(G)$} \\
        \cmidrule(lr){3-4}
            & & Upper Bound & Lower Bound \\
        \midrule
        General $n$-vertex graphs
            & $\le n$
            & $n$
            & $\Gamma(G)$ \\
        Paths $P_n$ ($n \ge 4$)
            & $2$
            & $3$
            & $3$ \\
        Cycles $C_n$
            & $\le 3$
            & $3$
            & $3$ \\
        Complete bipartite graphs $K_{m,n}$
            & $2$
            & $\min\{m,n\}+1$
            & $\min\{m,n\}+1$ \\
        Graphs of maximum degree $\Delta$
            & $\le\Delta+1$
            & $2^\Delta$
            & $\Gamma(G)$ \\
        Trees
            & $\le 2$
            & $O(\log n)$
            & $\Omega(\log n)$ \\
        Graphs of treewidth $t$
            & $\le t+1$
            & $O(t\log(1+n/t))$
            & $\Omega(t\log(1+n/t))$ \\
        Planar graphs
            & $\le 4$
            & $O(\log^2 n)$
            & $\Omega(\log n)$ \\
        \bottomrule
    \end{tabularx}
        \caption{Bounds on the chromatic number $\chi(G)$ and the stable chromatic number $\chiStable(G)$. Here $\Delta(G)$ and $\Gamma(G)$ denote the maximum degree and Grundy number, respectively.}
    \label{tab:bounds-summary}
\end{table*}

\paragraph{Related Work}
The closest related work is by \citet{samaris2023}, who studies graph coloring, possibly with color lists, in which vertices have linear preferences over the available colors. A coloring is \emph{Pareto optimal} if no other proper coloring makes some vertex better off without making any vertex worse off. \citeauthor{samaris2023} characterizes Pareto-optimal colorings and gives a mechanism for finding one.

Pareto optimality is incomparable with our notion of stability. A non-Pareto-optimal coloring can be stable because a Pareto improvement need not be realizable by a cyclic trade. Conversely, a Pareto-optimal coloring can be unstable because rotating colors along a blocking cycle need not produce another proper coloring.

Several other works also consider preferences or stability in graph coloring.
\citet{blum2008multiagent} study a multiagent
variant of graph coloring in which each agent controls a subgraph and ranks the colors used on that subgraph, with objectives including Pareto efficiency, fairness, and individual rationality.
\citet{kun2013anti} also use the term \emph{stable coloring} to mean a pure Nash equilibrium in an anti-coordination game. Unlike our model, their colorings need not be proper, vertices have no intrinsic color preferences, and stability concerns unilateral recoloring rather than envy cycles.

Beyond graph coloring, stability has been studied extensively in resource-allocation problems. In the house-allocation model of \citet{shapley1974cores}, a Pareto-optimal, core-stable allocation can be found by eliminating \emph{trading cycles}. Such a trading cycle corresponds exactly to a blocking cycle under a proper coloring of the complete graph $K_n$. Thus, our model of stable coloring generalizes the house-allocation model.
Stability in terms of trading has also been considered in matchings \citep{alcalde1994,chenEtAl2021}.

\section{Preliminaries}

For a nonnegative integer $t$, let $[t]=\{1,2,\ldots,t\}$.
Throughout the paper, a graph $G$ is finite, simple, undirected, and nonempty. We write $V$ and $E$ for its vertex and edge sets when no confusion arises, and let $n=|V|$. For a vertex $v\in V$, let $N(v)$ denote the set of neighbors of $v$, and let $d(v)=|N(v)|$ be its degree.
The maximum degree of $G$ is denoted by $\Delta(G)=\max_{v \in V}d(v)$.
We use $\tw(G)$ to denote the treewidth of $G$, defined in \Cref{subsec:tw}. 

A graph $G$ is \emph{bipartite} if its vertex set can be partitioned into two sets such that every edge has one endpoint in each set. We write $K_{m,n}$ for the complete bipartite graph with partite sets of sizes $m$ and $n$. We write $P_n$ and $C_n$ for the path and cycle on $n$ vertices, respectively.

A \emph{proper coloring}, or simply a \emph{coloring}, of $G$ is a function $\phi:V \to \mathbb{N}$ such that $\phi(u)\ne \phi(v)$ for every edge $uv\in E$. A coloring is called a \emph{$k$-coloring} if $\phi(V) \subseteq [k]$. The \emph{chromatic number} of $G$, denoted $\chi(G)$, is the minimum $k$ such that $G$ admits a $k$-coloring.

A coloring $\gamma:V\to [r]$ is called a \emph{Grundy coloring} if every vertex $v$ with $\gamma(v)=i$ has, for each $j<i$, a neighbor $u$ with $\gamma(u)=j$. The \emph{Grundy number} of $G$, denoted by $\Gamma(G)$, is the largest integer $r$ for which $G$ admits a Grundy coloring $\gamma$ with $\gamma(V)=[r]$ \citep{grundy1939mathematics,christen1979perfect}.
Equivalently, $\Gamma(G)$ is the maximum number of colors that can be used by the First-Fit greedy coloring algorithm over all orderings of $V$, where each vertex is colored with the smallest positive integer that does not appear on its previously colored neighbors.

A \emph{stable coloring instance} $I=(G,\succ)$ consists of an undirected graph $G=(V,E)$ and, for each $v\in V$, a strict total order $\succ_v$ on $\mathbb{N}=\{1,2,\ldots\}$. We write
$\succ=(\succ_v)_{v\in V}$ for the resulting preference profile.

Only a finite prefix of each ranking is relevant to the algorithms in this paper. For a prescribed number $k$ of colors, we explicitly list each ranking on $[\min\{k,n\}]$. If $k\le n$, this set is $[k]$.
If $k>n$, \Cref{cor:finite} guarantees a stable coloring using colors from $[n]$, so preferences among colors larger than $n$ do not affect the answer or the construction. When minimizing the number of colors, we list each ranking on $[n]$. We preprocess these lists to support constant-time preference comparisons. All running times include this preprocessing and are measured in the combined size of the graph and the lists.

Given an instance $I=(G,\succ)$ and a coloring $\phi$ of $G$, the \emph{envy graph} of $\phi$ is the directed graph $D_\phi=(V,A_\phi)$, where $(u,v)\in A_\phi$ if and only if $uv\in E$ and $\phi(v)\succ_u\phi(u)$.
A \emph{blocking cycle} is a directed cycle in $D_\phi$. Equivalently, for some $t\ge 2$, it is a sequence $v_1v_2\ldots v_tv_1$ such that $v_1,\ldots,v_t$ are distinct, $v_iv_{i+1}\in E$, and $\phi(v_{i+1})\succ_{v_i}\phi(v_i)$ for every $i\in[t]$, where $v_{t+1}=v_1$.
In particular, if an edge $uv\in E$ satisfies $\phi(u)\succ_v\phi(v)$ and $\phi(v)\succ_u\phi(u)$, then $uvu$ is a blocking cycle. A coloring with no blocking cycle is called a \emph{stable coloring}. Thus, a coloring is stable if and only if its envy graph is acyclic, or equivalently, if the vertices admit an ordering in which every envy arc goes from an earlier vertex to a later vertex.

As seen in \Cref{fig:exam-before-after}, a blocking cycle need not support a feasible simultaneous recoloring.
Assigning each $v_i$ the color $\phi(v_{i+1})$ preserves properness on the edges of the cycle. The resulting coloring is proper if and only if no conflict arises on a chord of the cycle or on an edge between the cycle and $V\setminus\{v_1,\ldots,v_t\}$.

Given a graph $G$ and a preference profile $\succ$, we define $\chiStable(G, \succ)$ as the minimum integer $k$ such that there exists a stable $k$-coloring. We define the \emph{stable chromatic number} of $G$ as
$$\chiStable(G) = \sup_{\succ} \chiStable(G, \succ),$$
where the supremum is taken over all preference profiles $\succ$ on $\mathbb{N}$.

\section{Bounds on the Stable Chromatic Number}\label{sec:bounds}
In this section, we investigate the number of colors we need to stably color a graph.
The results are organized into three parts. \Cref{subsec:upper,subsec:lower,subsec:monotone} establish our general tools: an upper bound based on reachability in acyclic orientations, a lower bound based on the Grundy number, and monotonicity under taking subgraphs. 
\Cref{subsec:pathcyclebipart} applies these tools to determine exact values for several elementary graph classes. \Cref{subsec:maxdegree,subsec:tw} derive structural bounds in terms of maximum degree and treewidth. We further give sharper bounds for planar graphs in \Cref{subsec:planar}.

\subsection{Upper Bound via Reachability}\label{subsec:upper}

Our main tool for upper bounds is the following reachability parameter.
Many upper bounds in the subsequent subsections are obtained by choosing
suitable acyclic orientations and applying \Cref{thm:reachability}. The
theorem also implies $\chiStable(G)\le n$ (\Cref{cor:finite}), and hence
that the stable chromatic number is well-defined.

    Given a graph $G$, define $\mathrm{reach}(G)$ as the minimum, over all acyclic edge orientations $D$ of $G$, of
    \begin{equation*}
        \max_{v \in V} \{\text{number of vertices reachable from $v$ (including $v$)}\}.\footnote{Acyclic orientations and reachability were previously studied by \citet{even2018best} in the context of centralized local algorithms.}
    \end{equation*}
%
We show that $\mathrm{reach}(G)$ is an upper bound on the stable
chromatic number. The proof colors the vertices in reverse topological order
of an orientation attaining $\mathrm{reach}(G)$, assigning each vertex its
most preferred color not used by any other vertex reachable from it, so every
envy arc is directed consistently with the orientation.

\begin{theorem} \label{thm:reachability}
    $\chiStable(G) \leq \mathrm{reach}(G)$.
\end{theorem}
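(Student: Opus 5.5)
Fix an arbitrary preference profile $\succ$ and an acyclic orientation $D$ of $G$ attaining $\mathrm{reach}(G)=:r$. For each vertex $v$, let $R(v)$ be the set of vertices reachable from $v$ in $D$, including $v$, so $|R(v)|\le r$ for every $v$. The plan is to build a stable coloring with colors from $[r]$ by processing the vertices in reverse topological order of $D$. Each vertex is handled only after all of its out-neighbors, and hence after all of $R(v)\setminus\{v\}$. When $v$ is processed, I assign $\phi(v)$ to be the $\succ_v$-most preferred color in $[r]\setminus\phi(R(v)\setminus\{v\})$. This set is nonempty because $|R(v)\setminus\{v\}|\le r-1$, so $\phi$ is a well-defined $r$-coloring.

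Properness is immediate. Every edge $uv$ is oriented, say $u\to v$, so $v\in R(u)\setminus\{u\}$ and $v$ is colored before $u$; the choice of $\phi(u)$ then excludes $\phi(v)$.

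For stability, I will show that every envy arc of $D_\phi$ agrees with the orientation $D$. Then $D_\phi$ is a subgraph of the acyclic digraph $D$, so it has no directed cycle and hence no blocking cycle. Take an edge $uv$ oriented $u\to v$ in $D$, and suppose there is an envy arc $(v,u)$, i.e.\ $\phi(u)\succ_v\phi(v)$. When $v$ was processed, it chose its favorite color outside $\phi(R(v)\setminus\{v\})$, and $\phi(u)\succ_v\phi(v)$. So $\phi(u)$ must have been forbidden for $v$: there is some $w\in R(v)\setminus\{v\}$ with $\phi(w)=\phi(u)$. But $u\to v$ gives $R(v)\subseteq R(u)$, so $w\in R(u)\setminus\{u\}$, and $w$ was colored before $u$. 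The rule for $u$ then forbids $\phi(u)=\phi(w)$, a contradiction. Hence only arcs $(u,v)$ with $u\to v$ can be envy arcs.

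Since $\succ$ was arbitrary, $\chiStable(G,\succ)\le r$ for every profile, which gives $\chiStable(G)\le\mathrm{reach}(G)$. The one subtle step is the stability argument: $u$ must avoid the colors of all of $R(u)$, not only of its out-neighbors, because the transitivity $R(v)\subseteq R(u)$ is exactly what closes the contradiction. The ordering is consistent with this, since every vertex of $R(u)\setminus\{u\}$ is a descendant of $u$ and is therefore processed before $u$ in reverse topological order.
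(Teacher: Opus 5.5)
Your proof is correct and is essentially the paper's own argument: you color in reverse topological order, give each vertex its favorite color not already used on its strict descendants, and use $R(v)\subseteq R(u)$ for an arc $u\to v$ to show that every envy arc agrees with $D$. The step $w\ne u$, needed to place $w$ in $R(u)\setminus\{u\}$, silently uses the acyclicity of $D$; the paper's inclusion $S_u\subseteq S_v$ relies on this in the same way.
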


\begin{proof}
    Let $D$ be an acyclic edge orientation that attains
    $\mathrm{reach}(G)$. For each $v\in V$, let $S_v$ be the set of
    vertices other than $v$ that are reachable from $v$ in $D$. Process
    the vertices in reverse topological order. When processing $v$, let
    $\varphi(v)$ be its most preferred color in
    $[\mathrm{reach}(G)]\setminus\{\varphi(x):x\in S_v\}$.

    Every vertex in $S_v$ occurs after $v$ in the topological order and
    is therefore colored before $v$. Since
    $|S_v|\le\mathrm{reach}(G)-1$, an available color always exists.
    Moreover, every out-neighbor of $v$ belongs to $S_v$, so the
    resulting coloring is proper.

    To prove stability, consider an edge $v\to u$ of $D$. Vertex $u$ is
    colored before $v$. We claim that
    $\varphi(u)\succ_u\varphi(v)$. Suppose instead that
    $\varphi(v)\succ_u\varphi(u)$. Since $u$ chose its most preferred
    available color, $\varphi(v)$ must belong to
    $\{\varphi(x):x\in S_u\}$. The edge $v\to u$ implies
    $S_u\subseteq S_v$, so $\varphi(v)$ was excluded when $v$ was
    colored, a contradiction. Thus, an envy arc on this edge can only
    be directed from $v$ to $u$. The envy graph is therefore a subgraph
    of $D$ and is acyclic.
\end{proof}

\begin{corollary} \label{cor:finite}
    $\chiStable(G) \leq n$.
\end{corollary}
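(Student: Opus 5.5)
The corollary follows directly from \Cref{thm:reachability} once we check that $\mathrm{reach}(G)\le n$. Take any acyclic orientation $D$ of $G$. Such orientations always exist: fix an arbitrary linear order $v_1,\dots,v_n$ of $V$ and orient each edge $v_iv_j$ with $i<j$ from $v_i$ to $v_j$. Every directed edge then goes forward in the order, so there is no directed cycle.

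In any such $D$, the set of vertices reachable from a vertex $v$, including $v$ itself, is a subset of $V$. Its size is therefore at most $n$. Hence the maximum over $v$ is at most $n$ for this orientation, and the minimum over all acyclic orientations satisfies $\mathrm{reach}(G)\le n$.

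Combining this with \Cref{thm:reachability} gives $\chiStable(G)\le\mathrm{reach}(G)\le n$.

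It is worth spelling out what this says. Every preference profile $\succ$ admits a stable coloring using colors from $[n]$. Indeed, the proof of \Cref{thm:reachability} only ever assigns colors from $[\mathrm{reach}(G)]\subseteq[n]$. So the supremum defining $\chiStable(G)$ is taken over a set of integers bounded by $n$, and it is finite. This justifies the earlier remark that preferences over colors larger than $n$ are irrelevant.

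There is no real obstacle here. The only point needing care is that an acyclic orientation exists, which the linear-order construction settles. The rest is the trivial bound on the size of a reachable set.
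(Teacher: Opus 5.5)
Your proof is correct and matches the paper's argument: bound $\mathrm{reach}(G)\le n$ trivially, since every reachable set is a subset of $V$, and then invoke \Cref{thm:reachability}. Your check that an acyclic orientation exists, via a linear order on $V$, is a small detail the paper leaves implicit.
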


\begin{proof}
    By definition, $\mathrm{reach}(G)$ is at most $n$. Therefore, by \Cref{thm:reachability}, $\chiStable(G) \leq n$ holds.
\end{proof}

\subsection{Lower Bound via Grundy Colorings}\label{subsec:lower}

Every stable coloring is proper, so $\chi(G)\le\chiStable(G)$.
We strengthen this lower bound using the Grundy number.

\begin{theorem}\label{thm:grundy-lower}
    $\Gamma(G) \leq \chiStable(G)$.
\end{theorem}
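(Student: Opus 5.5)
The plan is to turn a Grundy coloring $\gamma$ with $\gamma(V)=[r]$ into a preference profile $\succ$, and then show that every stable coloring of $(G,\succ)$ uses at least $r$ colors. The profile should make the Grundy structure rigid. Write $V_i=\gamma^{-1}(i)$ for the $i$-th class. By the Grundy property, each $v\in V_i$ has a neighbour in every $V_j$ with $j<i$.

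I would first dispose of the naive choice. If all vertices share one ranking, every envy arc points from the larger color to the smaller one in that common order, so the envy graph is always acyclic and there is no obstruction. The profile therefore has to depend on $\gamma$. My first candidate is: a vertex $v\in V_i$ ranks the colors $1,\dots,i-1$ highest, in decreasing order $i-1\succ_v i-2\succ_v\dots\succ_v 1$, then color $i$, then all remaining colors in any fixed order. With this choice $\gamma$ itself is stable. Indeed, $v\in V_i$ can envy only a neighbour with a smaller $\gamma$-value, and that neighbour cannot envy back, so every envy arc decreases $\gamma$.

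The core step is an induction on $i$. I would try to prove the invariant that in any stable coloring $\phi$, after a suitable renaming of colors (allowed because only $r-1$ colors are assumed available), some vertex of $V_i$ receives a color outside the $i-1$ colors already forced on lower classes. The argument would use acyclicity of $D_\phi$ through a sink of $D_\phi$ restricted to a minimal witness structure, namely $v\in V_r$ together with a recursively chosen neighbour in each lower class (a ``Grundy tree''). For the inductive step, take $v\in V_i$ whose witnesses $u_1,\dots,u_{i-1}$ carry $i-1$ distinct colors by induction. Since $\phi$ is proper, $v$ avoids these colors. If $\phi$ used only $r-1$ colors at $i=r$, the color of $v$ would have to coincide with one appearing on the witness structure. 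I would then exhibit a cyclic chain of envy along the witness tree: $v$ prefers the low colors held by its witnesses, and some witness prefers $v$'s color. This closed directed walk in $D_\phi$ would be a blocking cycle, contradicting stability.

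The main obstacle is choosing the profile so that this envy cycle really closes. The witnesses of different vertices overlap, and a witness $u\in V_j$ has no a priori reason to prefer a color larger than its own. If the decreasing-prefix profile fails, my fallback is to build the profile directly on a Grundy tree, via the classical fact that a Grundy coloring with $r$ colors contains an induced ``Grundy atom'' witnessing it. I would use rotated rankings $\gamma(v)\succ_v\gamma(v)+1\succ_v\dots\succ_v r\succ_v 1\succ_v\dots\succ_v\gamma(v)-1$. Rotated rankings make a two-vertex blocking cycle appear whenever two adjacent vertices of the tree are assigned colors that are ``out of order'' with respect to $\gamma$. This would force $\phi$ to agree with $\gamma$ up to relabelling on the tree, hence to use $r$ colors. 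I would combine this with the restriction-to-subgraph monotonicity announced for \Cref{subsec:monotone}, by assigning arbitrary preferences off the tree.
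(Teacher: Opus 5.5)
There is a genuine gap. You never fix a profile that works, and both profiles you propose fail. Take $P_4=v_1v_2v_3v_4$ with its Grundy coloring $(\gamma(v_1),\dots,\gamma(v_4))=(1,2,3,1)$. This graph is its own witness structure, so passing to a ``Grundy atom'' does not help.

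\emph{Decreasing-prefix profile.} Restricted to $\{1,2\}$, the rankings are $1\succ 2$ for $v_1,v_2,v_4$ and $2\succ 1$ for $v_3$. The coloring $(\phi(v_1),\dots,\phi(v_4))=(2,1,2,1)$ has the single envy arc $v_1\to v_2$, so it is a stable $2$-coloring.

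\emph{Rotated profile.} The rankings are $1\succ 2$ for $v_1,v_3,v_4$ and $2\succ 1$ for $v_2$. The coloring $(1,2,1,2)$ has the single envy arc $v_4\to v_3$, so it is also stable.

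The rotation fails structurally. A vertex $v\in V_i$ ranks its wrapped-around colors as $1\succ 2\succ\cdots\succ i-1$. So if $\phi(v)=j<i$ and the Grundy neighbour $u\in V_j$ holds a color strictly between $j$ and $i$, then $v$ does not envy $u$, and no blocking cycle closes. For $V_r$, the restriction to $[r-1]$ is $1\succ\cdots\succ r-1$, exactly backwards.

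Your induction also cannot be run ``after a suitable renaming of colors''. The preferences refer to specific colors, so relabelling changes the envy graph. Nothing forces the witnesses of $v$ to carry distinct colors either. Whether $\gamma$ itself is stable under the profile is irrelevant to the lower bound.

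The missing idea is to reverse the wrap-around. For $i<r$, let $v\in V_i$ rank $i\succ i+1\succ\cdots\succ r-1\succ i-1\succ i-2\succ\cdots\succ 1$, and let $v\in V_r$ rank $r-1\succ\cdots\succ 1$. Then every $u\in V_j$ with $j<r$ ranks $j$ first. Moreover, every $v\in V_i$ holding a color $c<i$ strictly prefers every larger color of $[r-1]$ to $c$.

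The right invariant is then the pointwise numerical statement $\phi(v)\ge\gamma(v)$, proved by induction on $\gamma(v)$. Suppose $\phi(v)=j<\gamma(v)$. The Grundy neighbour $u\in V_j$ has $\phi(u)\ge j$ by induction and $\phi(u)\ne j$ by properness, so $\phi(u)>j$. Then $u$ envies $v$, since it wants its top color $j$, and $v$ envies $u$, since it wants any color above $j$. This gives the blocking cycle $uvu$. Applied to $V_r$, the invariant demands a color at least $r$, contradicting an $(r-1)$-coloring.

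This is the paper's proof. It works directly on $G$, uses only $2$-cycles, and needs neither a witness tree nor subgraph monotonicity.
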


\begin{proof}
Let $r=\Gamma(G)$. If $r=1$, then the claim follows trivially from $1\le\chi(G)\le\chiStable(G)$, so assume that $r\ge 2$. Fix a Grundy coloring $\gamma:V \to [r]$ using all $r$ colors.
Thus, if $\gamma(v)=i$, then for every $j<i$, vertex $v$ has a neighbor $u$ with $\gamma(u)=j$.

We construct a preference profile $\succ$ under which no stable $(r-1)$-coloring exists.
For each $i \in [r-1]$ and $v \in \gamma^{-1}(i)$, define the restriction of $\succ_v$ to $[r-1]$ by
$$
	i\succ_v i+1\succ_v \cdots\succ_v r-1
	\succ_v i-1\succ_v i-2\succ_v\cdots\succ_v 1.
$$
For every vertex $v \in \gamma^{-1}(r)$, define the restriction of $\succ_v$ to $[r-1]$ by
$$
	r-1\succ_v r-2\succ_v\cdots\succ_v 1.
$$
Extend these orders arbitrarily to strict rankings of $\mathbb{N}$.
The key property is that if a vertex $v$ with $\gamma(v)=i$ currently has a color $c<i$, then it strictly prefers every color $d\in[r-1]$ with $d>c$ to $c$.

Suppose for contradiction that there is a stable $(r-1)$-coloring $\phi$.
We prove by induction on $i$ that every vertex $v \in \gamma^{-1}(i)$ receives a color $\phi(v)\ge i$.

For $i=1$, the claim follows because every color is at least $1$.
Assume the result holds for all $j<i$.
Take $v \in \gamma^{-1}(i)$ and suppose $\phi(v)=j<i$.
Since $\gamma$ is a Grundy coloring, $v$ has a neighbor $u$ such that $\gamma(u)=j$.
By the induction hypothesis, $\phi(u)\ge j$.
If $\phi(u)=j$, then $u$ and $v$ have the same color, contradicting properness of $\phi$.
Hence $\phi(u)>j$.

Since $u \in \gamma^{-1}(j)$ and
$j<\phi(u)\le r-1$, we have $j \succ_u \phi(u)$.
On the other hand, since $v \in \gamma^{-1}(i)$ has color $\phi(v)=j<i$ and $\phi(u)>j$, we have $\phi(u) \succ_v j$.
Thus $uvu$ is a blocking cycle, which contradicts stability.

Therefore every vertex $v \in \gamma^{-1}(i)$ must have $\phi(v) \ge i$.
In particular, a vertex $v \in \gamma^{-1}(r)$ would need a color at least $r$, a contradiction.
Hence, $\chiStable(G)\ge r=\Gamma(G)$.
\end{proof}

Hence, for a general graph $G$, we have
\begin{equation*}
    \chi(G) \le \Gamma(G) \le \chiStable(G) \le \mathrm{reach}(G).
\end{equation*}

\subsection{Monotonicity under Subgraphs}\label{subsec:monotone}

Unlike the Grundy number, the stable chromatic number is monotone under
taking subgraphs. This implies that the Grundy lower bound is not tight.

\begin{proposition}\label{prop:tcf-monotone}
The stable chromatic number is monotone under taking subgraphs. That is, if
$H\subseteq G$, then $\chiStable(H)\le \chiStable(G)$.
\end{proposition}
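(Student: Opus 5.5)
Fix a preference profile $\succ^H$ on $H$; we must show that $(H,\succ^H)$ admits a stable $k$-coloring for $k=\chiStable(G)$. The plan is to extend $\succ^H$ to a profile $\succ$ on $G$, take a stable $k$-coloring $\phi$ of $(G,\succ)$ guaranteed by the definition of $\chiStable(G)$, and restrict it to $V(H)$.

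Two things must be checked about the restriction. It is proper on $H$, because $E(H)\subseteq E(G)$. It is stable, because the envy graph of $\phi|_{V(H)}$ with respect to $(H,\succ^H)$ is a subgraph of $D_\phi$. Indeed, an arc $(u,v)$ in the former requires $uv\in E(H)\subseteq E(G)$ and $\phi(v)\succ^H_u\phi(u)$. Since $\succ_u=\succ^H_u$ for $u\in V(H)$, this is also an arc of $D_\phi$. Any directed cycle in the smaller envy graph would therefore be a blocking cycle for $\phi$ in $G$, which is impossible. This gives $\chiStable(H,\succ^H)\le k$ for every $\succ^H$, and hence $\chiStable(H)\le\chiStable(G)$.

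The only choices needed are these. Set $\succ_v=\succ^H_v$ for $v\in V(H)$. For $v\in V(G)\setminus V(H)$, choose $\succ_v$ arbitrarily, for instance the natural order on $\mathbb{N}$. Because the supremum in the definition of $\chiStable(G)$ ranges over all profiles, it covers this extended one, and Corollary~\ref{cor:finite} ensures that $k$ is finite.

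The only point needing care is that $H$ may lack edges of $G$ between its own vertices, since it need not be induced. Such extra edges in $G$ only add constraints and envy arcs; they never remove them. So the inclusion of envy graphs above still holds. There is no real obstacle here; the argument is a direct restriction.
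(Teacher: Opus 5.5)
Your proposal is correct and is essentially the paper's own argument: extend the profile on $H$ arbitrarily to $G$, take a stable $\chiStable(G)$-coloring $\phi$ of $G$, and note that its restriction to $V(H)$ is proper and has an envy graph that is a subdigraph of $D_\phi$, hence acyclic. Your remarks on non-induced subgraphs and on the finiteness of $k$ via \Cref{cor:finite} are correct details that the paper leaves implicit.
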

\begin{proof}
Let $k=\chiStable(G)$. Fix an arbitrary preference profile $\succ^H$ on $H$ and extend it arbitrarily to a preference profile $\succ^G$ on $G$. Take a stable $k$-coloring $\phi$ of $(G, \succ^G)$ and restrict it to $V(H)$. This restriction is a proper $k$-coloring of $H$, since every edge of $H$ is also an edge of $G$.
The resulting envy digraph on $H$ is a subdigraph of the envy digraph of $\phi$ on $G$. It is therefore acyclic, so the restriction is a stable $k$-coloring of $H$.
\end{proof}

\subsection{Paths, Cycles, and Complete Bipartite Graphs}\label{subsec:pathcyclebipart}

Paths satisfy $\chi(P_n)\le 2$, cycles satisfy $\chi(C_n)\le 3$, and complete bipartite graphs satisfy $\chi(K_{m,n})=2$ for $m,n\ge 1$.
We determine the stable chromatic numbers of these three graph classes.

\begin{proposition}\label{prop:paths}
    For every path $P_n$ on $n$ vertices,
    \begin{equation*}
        \chiStable(P_n) =
        \begin{cases}
            1 & \text{if} \ \, n=1,\\
            2 & \text{if} \ \, 2 \le n \le 3,\\
            3 & \text{if} \ \, n\ge 4.
        \end{cases}
    \end{equation*}
\end{proposition}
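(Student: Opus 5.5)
We handle the three regimes separately, combining the general tools with small explicit arguments.

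\emph{Lower bounds.} For $n=1$ the value $1$ is trivial. For $n\ge 2$, $P_n$ has an edge, so $\chiStable(P_n)\ge\chi(P_n)=2$. For $n\ge 4$ we invoke \Cref{thm:grundy-lower}. The path $P_4=abcd$ has the Grundy coloring $\gamma(a)=1,\gamma(b)=2,\gamma(c)=3,\gamma(d)=1$, since $c$ sees color $1$ at $d$ and color $2$ at $b$. Hence $\Gamma(P_4)=3$ and $\chiStable(P_4)\ge 3$. By \Cref{prop:tcf-monotone}, every $P_n$ with $n\ge 4$ contains $P_4$, so $\chiStable(P_n)\ge 3$.

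\emph{Upper bounds via \Cref{thm:reachability}.} We bound $\mathrm{reach}$ by choosing orientations.
\begin{itemize}
\item For $n\le 2$, orient the single edge if present; reachability is at most $2$, giving $\chiStable\le 2$ for $n=2$ and $\le 1$ for $n=1$.
\item For $n\ge 4$, orient the edges of $v_1v_2\cdots v_n$ alternately, so that $v_1\to v_2\leftarrow v_3\to v_4\leftarrow\cdots$. Every vertex is then a source or a sink. A source reaches at most itself and two neighbors, so $\mathrm{reach}(P_n)\le 3$ and hence $\chiStable(P_n)\le 3$.
\end{itemize}

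\emph{The case $n=3$.} Here the reachability bound gives only $3$: with the path $a-b-c$, either $b$ is a sink and $a,c$ each reach $2$, or some vertex reaches all three. Indeed orienting $a\to b\leftarrow c$ gives reach $2$, so $\mathrm{reach}(P_3)=2$ and $\chiStable(P_3)\le 2$ directly.

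We should double-check this: in $a\to b\leftarrow c$, the set $S_a=\{b\}$ has size $1$, so reach is $2$. The theorem therefore applies and $n=3$ is not an obstacle. For $n=4$, however, any orientation of three edges has a directed path of length $2$ somewhere or a vertex of out-degree $2$. This is consistent with the lower bound, which is exactly where the Grundy argument takes over.

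The only delicate step is certifying the Grundy coloring of $P_4$ and checking that the alternating orientation has maximum reach $3$. Both are immediate, so the proof is a short assembly of \Cref{thm:reachability}, \Cref{thm:grundy-lower} and \Cref{prop:tcf-monotone}.
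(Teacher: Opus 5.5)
Your proof is correct and is essentially the paper's argument: a Grundy lower bound combined with the alternating source/sink orientation in \Cref{thm:reachability}. The only difference is minor: you certify $\Gamma(P_4)\ge 3$ and lift it to all $n\ge 4$ via \Cref{prop:tcf-monotone}, whereas the paper gives a First-Fit order on $P_n$ directly; also delete the opening claim of your $n=3$ paragraph that the reachability bound ``gives only $3$'', since, as you immediately note, $a\to b\leftarrow c$ shows $\mathrm{reach}(P_3)=2$.
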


\begin{proof}
    Let $v_1, \dots, v_n$ be the vertices of $P_n$ in path order, so its
    edges are $v_iv_{i+1}$ for $i=1,\dots,n-1$. The lower and upper
    bounds are as follows.
    \begin{itemize}
        \item $\Gamma(P_1) \geq 1$, $\Gamma(P_2) \geq 2$,
        $\Gamma(P_3) \geq 2$, and $\Gamma(P_n) \geq 3$ for $n \geq 4$.
        For the last bound, process $v_1,v_2,v_4,v_3$ first and the
        remaining vertices arbitrarily. First-Fit then assigns color $3$
        to $v_3$.
        \item $\mathrm{reach}(P_1) \leq 1$, $\mathrm{reach}(P_2) \leq 2$, $\mathrm{reach}(P_3) \leq 2$, and $\mathrm{reach}(P_n) \leq 3$ for $n \geq 4$. These bounds are obtained by orienting each edge $v_i v_{i+1}$ in the direction $v_i \to v_{i+1}$ for odd $i$, and $v_{i+1} \to v_i$ for even $i$.
    \end{itemize}
    The displayed lower and upper bounds coincide for every $n$, so
    \Cref{thm:reachability,thm:grundy-lower} give the claimed values.
\end{proof}

\begin{restatable}{proposition}{cycles}\label{prop:cycles}
    For every cycle $C_n$ on $n \geq 3$ vertices, $\chiStable(C_n) = 3$.
\end{restatable}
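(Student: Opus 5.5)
The plan is to prove the lower bound $\chiStable(C_n)\ge 3$ and the upper bound $\chiStable(C_n)\le 3$ separately, using the tools of \Cref{subsec:upper,subsec:lower,subsec:monotone}.

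For the lower bound, if $n=3$ then $C_3=K_3$ and $3=\chi(C_3)\le\chiStable(C_3)$. If $n\ge 4$, then $C_n$ contains $P_4$ as a subgraph. By \Cref{prop:tcf-monotone} and \Cref{prop:paths}, $\chiStable(C_n)\ge\chiStable(P_4)=3$. The Grundy number alone would not suffice here, since $\Gamma(C_4)=2$.

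For the upper bound with $n$ even, I will apply \Cref{thm:reachability} to the alternating orientation, in which every vertex is a source or a sink. A source reaches exactly itself and its two neighbours, so $\mathrm{reach}(C_n)\le 3$. For $n=3$, the transitive orientation also gives $\mathrm{reach}(C_3)=3$.

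The main obstacle is odd $n\ge 5$, where $\mathrm{reach}(C_n)=4$. Any acyclic orientation of an odd cycle contains a directed path $v_1\to v_2\to v_3$, and the source at its start reaches at least four vertices. So \Cref{thm:reachability} cannot be applied as a black box, and I would redo its proof by hand. Orient the cycle alternating except for a single path $v_1\to v_2\to v_3$, with $v_1\to v_n$, $v_3$ a sink, and $v_n$ a sink. Color in reverse topological order as in the proof of \Cref{thm:reachability}, restricted to the palette $[3]$: sinks take their top color, and every source other than $v_1$ sees only two colored neighbours, so the original argument applies verbatim. Vertex $v_2$ avoids $\varphi(v_3)$ and takes its best remaining color.

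For $v_1$, observe that an arc towards a sink is harmless, because a sink holding its top color never envies. So the only constraints at $v_1$ are properness with respect to $\varphi(v_2)$ and $\varphi(v_n)$, and ensuring that $v_1$ and $v_2$ do not envy each other. I would case-split on whether $\varphi(v_n)\in\{\varphi(v_2),\varphi(v_3)\}$.

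If $\varphi(v_n)\in\{\varphi(v_2),\varphi(v_3)\}$, then $v_1$ can take the unique color outside $\{\varphi(v_2),\varphi(v_3)\}$. This is exactly the \Cref{thm:reachability} argument, since $v_2$ already rejected that color in favour of $\varphi(v_2)$ or it is forbidden for it.

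If all three colors are distinct, then $v_1$ is forced to take $\varphi(v_3)$. In this case I would check the edge $v_1v_2$ directly. If $v_2$ prefers $\varphi(v_3)$ to its own color, we are in trouble. I would first try to repair this by recoloring $v_2$ or by choosing a different position for the length-two path. The point is that the odd cycle has $n\ge 5$ vertices, so there is freedom to rotate which source is the special one, or to start the reverse-topological coloring at a different sink. It then remains to show that some choice avoids a $2$-cycle on $v_1v_2$.

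Once no $2$-cycle occurs, any remaining directed cycle in the envy graph would have to traverse all of $C_n$ consistently. This is impossible because at least one other source of the orientation only emits envy arcs. That completes the plan for $\chiStable(C_n)\le 3$.
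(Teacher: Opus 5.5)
Your lower bound and your treatment of even $n$ and of $n=3$ are correct and match the paper's. The gap is in the odd case $n\ge 5$, at exactly the step you leave open (``it then remains to show that some choice avoids a $2$-cycle on $v_1v_2$''). That step is the whole content of the case, and the local repairs you mention do not suffice. Fix the sinks at their favorite colors $\varphi(v_3)=x$ and $\varphi(v_n)=z\ne x$, and let $y$ be the third color. The only proper choices for $(\varphi(v_1),\varphi(v_2))$ are then $(x,y)$, $(x,z)$ and $(y,z)$. Suppose $v_1$ ranks $z\succ y\succ x$ and $v_2$ ranks $x\succ y\succ z$. Then each of the three choices creates mutual envy on $v_1v_2$, so recoloring $v_1$ or $v_2$ cannot help. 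The argument therefore rests on moving the special path, and you give no reason why some position must work.

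The gap can be closed inside your framework. Write $t(v)$ for the favorite color of $v$ in $[3]$. If the two flanking sinks share a favorite color, the two orders of the remaining two colors on the adjacent non-sinks cannot both give mutual envy. If the flanking favorites differ, the check above shows that failure forces each non-sink to have the favorite of its adjacent sink. In particular, failure at $\{v_1,v_2\}$ forces $t(v_1)=t(v_n)\ne t(v_3)=t(v_2)$. Now shift the construction by one step, so that $v_2,v_3$ are the adjacent non-sinks and $v_1,v_4,v_6,\dots,v_{n-1}$ are the sinks. Failure there would force $t(v_2)=t(v_1)$. Hence one of the two placements succeeds, and your concluding argument applies: sinks have envy out-degree $0$, so only a $2$-cycle on the non-sink edge is possible.

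The paper avoids orientations for odd $n$ altogether. It splits the cycle into maximal runs of equal favorite color and colors each run alternately with its favorite and an auxiliary color, chosen by the run's parity and the next run's favorite. This makes every edge have an endpoint holding its favorite color, and the case where all favorites coincide is handled separately.
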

\begin{proof}
First, we prove the lower bound $\chiStable(C_n) \geq 3$.
\begin{itemize}
    \item $n = 3$: Since $C_3$ is not $2$-colorable, $\chi_{\mathrm{stable}}(C_3) \geq 3$.
    \item $n \geq 4$: $C_n$ contains $P_4$ as a subgraph. \Cref{prop:paths} states that $\chiStable(P_4) = 3$, so by \Cref{prop:tcf-monotone}, $\chiStable(C_n) \geq 3$.
\end{itemize}

Next, we prove the upper bound $\chiStable(C_n) \leq 3$. If $n$ is even, we use the reachability bound: when we orient the edges of $C_n$ in alternating directions, at most three vertices are reachable from each vertex. Therefore, $\mathrm{reach}(C_n) = 3$, and by \Cref{thm:reachability}, $\chiStable(C_n) \leq 3$.

It remains to show that, if $n$ is odd, three colors suffice for $C_n$.
Fix an arbitrary preference profile $\succ$.
For each vertex $v$, let $t(v)$ denote its favorite color among $\{1,2,3\}$.
First, suppose that not all vertices have the same favorite color.
Partition the vertices of the cycle into maximal sequences of consecutive vertices on which $t(v)$ is constant.
Consider one such sequence $R=(x_1,\dots,x_\ell)$, and suppose that all vertices in $R$ have favorite color $a$. Let $d$ be the favorite color of the first vertex in the next run (which exists as not all vertices have the same favorite color). Since the sequences are maximal, $d \ne a$.

Choose an auxiliary color $b_R\neq a$ as follows.
If $\ell$ is odd, choose any color $b_R\in\{1,2,3\}\setminus\{a\}$.
If $\ell$ is even, choose the unique color $b_R\in\{1,2,3\}\setminus\{a,d\}$.

Color the vertices in $R$ alternately by
\[
    \phi(x_j)=
    \begin{cases}
        a, & j\text{ odd},\\
        b_R, & j\text{ even}.
    \end{cases}
\]
Doing this for every sequence gives a proper coloring.
Inside a sequence, adjacent vertices alternate between $a$ and $b_R$.
At a boundary between runs, the first vertex of the next sequence receives its favorite color $d$. The last vertex of the preceding sequence receives $a \ne d$ if that run has odd length and $b_R \ne d$ if it has even length. Hence, the coloring is proper.

Every edge has at least one endpoint receiving its favorite color.
Hence no edge forms a blocking $2$-cycle.
Moreover, a vertex receiving its favorite color has outdegree zero in the envy graph. Any directed cycle in a digraph whose underlying graph is a subgraph of $C_n$ is either a blocking $2$-cycle or traverses the whole cycle.
Neither is possible, so the coloring is stable.

It remains to handle the case in which all vertices have the same
favorite color $a$.
Let the other two colors be $b,c$.
Choose one edge $uv$.
Color $u$ and $v$ with $b,c$ in whichever of the two orders avoids a blocking $2$-cycle on $uv$.
One of the two orders must do so: if the assignment $u=b,v=c$ forms a blocking $2$-cycle, then $u$ prefers $c$ to $b$ and $v$ prefers $b$ to $c$, so the swapped assignment gives both vertices their preferred color among $b,c$.
Now color the remaining path of $n-2$ vertices alternately with $a,b$, starting and ending with $a$. Every edge except possibly $uv$ has an endpoint colored $a$, and $uv$ was chosen not to form a blocking $2$-cycle.
Thus there is no blocking $2$-cycle, and a vertex colored $a$ has outdegree zero. As above, the envy graph is acyclic.
\end{proof}
The case of odd cycles shows that the reachability upper bound is not tight. For odd $n \ge 5$, we claim that $\mathrm{reach}(C_n) \ge 4$. Consider an arbitrary acyclic orientation of $C_n$. Assume there exists an acyclic orientation of $C_n$ where every vertex can reach at most three vertices (including itself). Then, there cannot exist a vertex with one incoming and one outgoing edge: such a vertex is reachable from some source vertex, and the resulting directed path of length at least two, together with the source’s other outgoing neighbor, would give at least four reachable vertices. This implies that sources and sinks must alternate around the cycle, which is impossible when the cycle has odd length. Hence, $\chiStable(C_n) = 3 < 4 \le \mathrm{reach}(C_n)$.

\begin{proposition}\label{prop:complete-bipartite}
Let $m,n$ be integers with $1\le m\le n$. Then
$\chiStable(K_{m,n})=m+1$.
\end{proposition}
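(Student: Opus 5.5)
The plan is to match an upper bound from \Cref{thm:reachability} with a lower bound from \Cref{thm:grundy-lower}, exactly as in the proof of \Cref{prop:paths}.

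\textbf{Upper bound.} Write $A$ for the side of size $m$ and $B$ for the side of size $n$. Orient every edge from $B$ to $A$. This orientation is acyclic because every vertex of $B$ is a source and every vertex of $A$ is a sink. Each $a\in A$ reaches only itself. Each $b\in B$ reaches itself and all $m$ vertices of $A$, so it reaches exactly $m+1$ vertices. Hence $\mathrm{reach}(K_{m,n})\le m+1$, and \Cref{thm:reachability} gives $\chiStable(K_{m,n})\le m+1$.

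\textbf{Lower bound.} It suffices to show $\Gamma(K_{m,n})\ge m+1$; a First-Fit ordering achieving $m+1$ colors is enough. Fix distinct vertices $a_1,\dots,a_m\in A$ and $b_1,\dots,b_m\in B$; this is possible since $m\le n$. Delete the edges $a_ib_i$ to obtain a crown, and order the vertices as $a_1,b_1,a_2,b_2,\dots,a_m,b_m$.

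In the crown, First-Fit gives $a_i$ and $b_i$ color $i$. The reason is that their previously colored neighbors are exactly $b_1,\dots,b_{i-1}$, respectively $a_1,\dots,a_{i-1}$, and these carry colors $1,\dots,i-1$.

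In $K_{m,n}$ itself, however, $a_i$ and $b_i$ are adjacent. So I adjust the ordering: take $m+1$ vertices $b_1,\dots,b_{m+1}$ of $B$ if $n\ge m+1$, and otherwise handle $m=n$ separately. I expect this last step to be the main obstacle, because in $K_{m,m}$ First-Fit may not actually reach $m+1$ colors. If so, I would abandon the Grundy route. Instead I would take a crown-like subgraph in which the Grundy number is $m+1$ and use \Cref{prop:tcf-monotone}. Failing that, I would build a preference profile directly, in the style of the proof of \Cref{thm:grundy-lower}, and argue by induction that no stable $m$-coloring exists.

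For that direct construction, assume $\phi$ is a stable $m$-coloring of $K_{m,n}$. Since every vertex of $A$ is adjacent to every vertex of $B$, properness forces $\phi(A)\cap\phi(B)=\emptyset$. Both sets are nonempty, so each side uses at most $m-1$ colors. I would then design the preferences so that disjointness forces a blocking $2$-cycle. For example, order the colors so that each vertex of $B$ ranks the colors in a rotation of $1,\dots,m$, and each vertex of $A$ ranks them in a complementary rotation. The goal is that some $a\in A$ and $b\in B$ each strictly prefer the other's color. The step where I would spend the effort is finding exactly which rotations guarantee this for every split of $[m]$ into $\phi(A)$ and $\phi(B)$.
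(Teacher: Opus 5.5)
Your upper bound is correct and matches the paper's. The lower bound has a genuine gap, and it starts with the reduction ``it suffices to show $\Gamma(K_{m,n})\ge m+1$''. That inequality is false for every $m,n\ge 1$, not only when $m=n$.

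In any First-Fit run on $K_{m,n}$, all previously colored neighbors of a vertex lie on the opposite side. By induction, the side reached first receives only color $1$ and the other side only color $2$, so $\Gamma(K_{m,n})=2$. This is exactly why the paper passes to a subgraph and invokes \Cref{prop:tcf-monotone}: $\chiStable$ is monotone under subgraphs, while $\Gamma$ is not.

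Your fallbacks do not close the gap. Your crown (all $m$ edges $a_ib_i$ deleted) reaches only $m$ colors. Adding a vertex $b_{m+1}$ joined to all $a_i$ does give $m+1$, but only inside that subgraph, and only when $n\ge m+1$. The case $n=m$, which you flag as the obstacle, is left open. The direct preference construction is not carried out either. Note that if all vertices on one side share a single ranking, two colors always suffice: give that side its preferred color of $\{1,2\}$, so no envy arc leaves that side. So the rankings would have to vary from vertex to vertex, which amounts to re-deriving the profile in \Cref{thm:grundy-lower}.

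The missing idea is to delete only $m-1$ of the matching edges. Let $H$ be $K_{m,n}$ minus $a_ib_i$ for $i\in[m-1]$. Run First-Fit in the order $a_1,b_1,\dots,a_{m-1},b_{m-1},a_m,b_m$, followed by the remaining vertices of $B$.
\begin{itemize}
\item As in your crown, $a_i$ and $b_i$ receive color $i$ for $i<m$.
\item The earlier neighbors of $a_m$ are $b_1,\dots,b_{m-1}$, so $a_m$ receives color $m$.
\item Because the edge $a_mb_m$ was kept, the earlier neighbors of $b_m$ are $a_1,\dots,a_m$, carrying colors $1,\dots,m$. So $b_m$ receives color $m+1$.
\end{itemize}
This needs only $n\ge m$, so it covers $n=m$. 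Then \Cref{thm:grundy-lower} and \Cref{prop:tcf-monotone} give $\chiStable(K_{m,n})\ge\chiStable(H)\ge\Gamma(H)\ge m+1$.
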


\begin{proof}
Let $X=\{x_1,\dots,x_m\}$ and $Y=\{y_1,\dots,y_n\}$ be the two partite sets of $K_{m,n}$. Orient every edge from $Y$ to $X$. This orientation shows that $\mathrm{reach}(K_{m,n})\le m+1$, and hence $\chiStable(K_{m,n})\le m+1$ by \Cref{thm:reachability}.

For the lower bound, let $H$ be the graph obtained from $K_{m,n}$ by
removing $x_i y_i$ for every $i\in[m-1]$. Applying First-Fit in the order
\[
    x_1,y_1,\dots,x_{m-1},y_{m-1},x_m,y_m,y_{m+1},\dots,y_n
\]
uses $m+1$ colors, so $\Gamma(H)\ge m+1$. By
\Cref{thm:grundy-lower} and \Cref{prop:tcf-monotone},
$\chiStable(K_{m,n})\ge\chiStable(H)\ge\Gamma(H)\ge m+1$.
\end{proof}

The proofs also give polynomial-time algorithms that, given a preference
profile, construct stable colorings of $P_n$, $C_n$, and $K_{m,n}$ using
$\chiStable(P_n)$, $\chiStable(C_n)$, and $\chiStable(K_{m,n})$ colors,
respectively.

\subsection{Bound in Terms of Maximum Degree}\label{subsec:maxdegree}
A greedy coloring algorithm uses at most $\Delta+1$ colors on every graph $G$ of maximum degree $\Delta$, and hence $\chi(G)\le\Delta+1$. This bound does not extend to the stable chromatic number; in the appendix, we present an example where $\chiStable(G) > \Delta(G) + 1$.
Still, the stable chromatic number is bounded by a function of $\Delta(G)$.

\begin{theorem}\label{thm:max-degree}
For every graph $G$ of maximum degree $\Delta$,
$\chiStable(G)\le 2^\Delta$.
\end{theorem}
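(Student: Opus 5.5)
The plan is to go through \Cref{thm:reachability}. It suffices to show $\mathrm{reach}(G)\le 2^\Delta$, that is, to build an acyclic orientation in which every vertex reaches at most $2^\Delta$ vertices, itself included. I will argue by induction on $\Delta$. The base case $\Delta=0$ is trivial, since every vertex reaches only itself and $2^0=1$.

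For the inductive step, let $S$ be a maximal independent set of $G$. By maximality, every vertex of $G-S$ has at least one neighbor in $S$, so $\Delta(G-S)\le\Delta-1$. By induction, $G-S$ has an acyclic orientation $D'$ in which every vertex reaches at most $2^{\Delta-1}$ vertices. I extend $D'$ to $G$ by orienting every edge between $G-S$ and $S$ towards $S$. This is acyclic, because the vertices of $S$ are sinks ($S$ is independent) and $D'$ is acyclic.

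A vertex of $S$ then reaches only itself. A vertex $v\notin S$ reaches a set $R\subseteq V\setminus S$ with $|R|\le 2^{\Delta-1}$, together with the $S$-neighbours of $R$. The target bound $2^\Delta=2\cdot 2^{\Delta-1}$ therefore amounts to showing that $|N_S(R)|\le |R|$, i.e. that the $S$-neighbours of $R$ contribute on average at most one extra vertex per reached vertex.

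This is the main obstacle. A vertex $u\notin S$ with $k_u\ge 2$ neighbours in $S$ contributes up to $k_u$ new vertices, but it also has degree at most $\Delta-k_u$ in $G-S$. To exploit this trade-off, I would strengthen the induction hypothesis to a per-vertex statement. The first version to try is: there is an acyclic orientation in which every $v$ reaches at most $2^{d(v)}$ vertices, with $d(v)$ its degree in the current graph.

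With this hypothesis, a vertex $u\notin S$ reaches, within $G-S$, at most $2^{d(u)-k_u}$ vertices. Adding its $k_u$ arcs into $S$ and the $S$-neighbours of the other reached vertices must then be shown to give at most $2^{d(u)}$. The key point to verify is the inequality
\[
  2^{d-k}+(\text{$S$-neighbours of the reached set})\le 2^{d}.
\]
Proving it may require redefining $R(v)$ recursively as $1+\sum$ over out-neighbours rather than as a union. The recursion $f(d)\le 1+\sum_{i<d} f(i)$ has solution $2^d$, which suggests arranging that the out-neighbours of a vertex of out-degree $d$ have pairwise distinct, smaller "levels".

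If the per-vertex induction does not close, I would fall back on exactly this level structure: pick the orientation so that each out-neighbour of $v$ has strictly smaller remaining degree, and prove the recursive bound directly.
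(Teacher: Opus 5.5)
Your proposal stops exactly where the proof has to happen. The bound on the $S$-neighbours of the reached set is only announced as something to verify. For the construction you describe (make an arbitrary maximal independent set $S$ into sinks and recurse on $G-S$), that bound is false under either induction hypothesis.

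Take the tree with $\Delta=3$ in which $v$ is adjacent to $a,u,w$; $u$ has two further leaf neighbours $u_1,u_2$; $w$ is adjacent to a leaf $b$ and to $u'$; and $u'$ has two further leaf neighbours $u'_1,u'_2$. The six leaves form a maximal independent set $S$, and $G-S$ is the path $u\,v\,w\,u'$. Your recursion may choose $\{u,u'\}$, then $\{w\}$, then $\{v\}$. In $G-S$ the vertex $v$ then reaches $4$ vertices, which meets both $2^{\Delta-1}$ and $2^{d_{G-S}(v)}$. Once $S$ is added as sinks, however, $v$ reaches all $10>2^3=2^{d(v)}$ vertices, with $|N_S(R)|=6>4=|R|$.

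The trade-off you hoped to exploit does not materialise. The vertices $u$ and $u'$ have small degree in $G-S$, yet they sit directly below $v$ and each brings in two vertices of $S$. An inductive hypothesis that speaks only about $G-S$ cannot see this.

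The missing idea is to restrict which vertices may enter each independent set. The paper builds levels $S_1,S_2,\dots$, where $S_i$ is a maximal independent set among the not-yet-placed vertices having fewer than $i$ already placed neighbours. Every edge is oriented towards the lower level. Maximality gives the invariant that after phase $i$ every unplaced vertex has at least $i$ placed neighbours. Consequently:
everything is placed by phase $\Delta+1$;
a vertex on level $i$ has exactly $i-1$ out-neighbours;
sorted by level, the $j$-th of these lies on level at most $j$.

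This yields $f(i)\le 1+\sum_{j<i}f(j)=2^{i-1}$. That is precisely the recursion you identified, with this domination condition replacing your ``pairwise distinct levels''. It also proves your per-vertex bound $2^{d(v)}$, since a level-$i$ vertex has degree at least $i-1$. In the example, if $S_1$ is the set of leaves, then $u$ and $u'$ already have two placed neighbours after phase $1$. They are therefore barred from level $2$, and the blow-up cannot occur.
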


\begin{proof}
We construct an acyclic edge orientation with at most $2^\Delta$ vertices reachable from each vertex. For $i=1,\ldots,\Delta+1$, suppose that colors $1,\ldots,i-1$ have already been assigned. Let $X_i$ be the set of uncolored vertices with fewer than $i$ colored neighbors. Choose a maximal independent set $S_i$ of $G[X_i]$ and assign color $i$ to every vertex in $S_i$.

After phase $i$, every uncolored vertex has at least $i$ colored neighbors. We prove this by induction on $i$. The claim holds for $i=0$. Let $i\ge 1$ and consider a vertex $v$ that remains uncolored after phase $i$. If $v\notin X_i$, then it already had at least $i$ colored neighbors before this phase. If $v\in X_i$, then the induction hypothesis gives at least $i-1$ colored neighbors, and the maximality of $S_i$ gives a neighbor in $S_i$. This proves the claim. In particular, no vertex remains uncolored after phase $\Delta+1$.

Let $\psi$ be the resulting coloring, and orient every edge from the endpoint with larger $\psi$-value to the endpoint with smaller $\psi$-value. Each $S_i$ is independent (in $G[X_i]$, hence $G$), so $\psi$ is proper and the orientation is well-defined. It is acyclic because $\psi$ strictly decreases along every directed edge.

We prove by induction on $i$ that at most $2^{i-1}$ vertices are reachable from any vertex $v$ with $\psi(v)=i$, including $v$ itself.
The claim holds for $i=1$. Let $i\ge 2$. Immediately before phase $i$, the vertex $v$ has exactly $i-1$ colored neighbors: it has at least $i-1$ by the invariant and fewer than $i$ because $v\in X_i$. Denote these neighbors by $u_1,\ldots,u_{i-1}$ so that $\psi(u_1)\le\cdots\le\psi(u_{i-1})$. For every $j\in[i-1]$, the invariant after phase $j$ implies that $v$ has at least $j$ neighbors of color at most $j$. Hence $\psi(u_j)\le j$. Every directed path starting at $v$ first enters one of these neighbors. By the induction hypothesis, the number of vertices reachable from $v$ is therefore at most
\[
    1+\sum_{j=1}^{i-1}2^{\psi(u_j)-1}
    \le 1+\sum_{j=1}^{i-1}2^{j-1}
    =2^{i-1}.
\]
Thus, $\mathrm{reach}(G)\le 2^\Delta$, and \Cref{thm:reachability} gives $\chiStable(G)\le 2^\Delta$.
\end{proof}

\subsection{Bounds in Terms of Treewidth}\label{subsec:tw}

A \emph{tree decomposition} of $G$ consists of a tree $T$ and bags $B_x\subseteq V(G)$ indexed by $x\in V(T)$ such that every vertex belongs to a bag, the endpoints of every edge belong to a common bag, and the bags containing any fixed vertex induce a connected subtree of $T$. Its width is $\max_{x\in V(T)}|B_x|-1$.
The treewidth of $G$, denoted by $\tw(G)$, is the minimum width of a tree decomposition of $G$.

Every graph $G$ of treewidth $t$ satisfies $\chi(G)\le t+1$~\citep{cygan2015parameterized}. We show that its stable chromatic number is bounded in terms of $t$ and the number of vertices.
We use the following balanced-separator property of tree decompositions due to \citet[(2.5)--(2.6)]{robertson1986graph}.

\begin{lemma}[\citeauthor{robertson1986graph}~\citeyear{robertson1986graph}]\label{lem:treewidth-balanced-bag}
Let $G$ be an $n$-vertex graph with a tree decomposition of width $t$.
There is a bag $S$ of size at most $t+1$ such that every connected
component of $G-S$ has at most $n/2$ vertices.
\end{lemma}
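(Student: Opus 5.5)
This is the classical balanced-separator lemma, and I will reprove it from the tree decomposition itself. Fix a tree decomposition $(T,\{B_x\}_{x\in V(T)})$ of $G$ of width $t$; then every bag has at most $t+1$ vertices. For each edge $xy$ of $T$, removing it splits $T$ into two subtrees $T_x\ni x$ and $T_y\ni y$. Let $V_{x\to y}$ be the set of vertices of $G$ that appear in some bag of $T_y$ but not in $B_x$.

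First I will show that every vertex $w\notin B_x$ has all its bags in exactly one component of $T-x$. The bags containing $w$ form a connected subtree of $T$ that avoids $x$. Next I will show that if $w,w'\notin B_x$ are adjacent, they lie in the same component. Some bag contains both of them, so their subtrees intersect. Consequently, every connected component $C$ of $G-B_x$ is contained in $V_{x\to y}$ for a single neighbor $y$ of $x$.

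Now orient each edge $xy$ of $T$ from $x$ toward $y$ whenever $|V_{x\to y}|>n/2$. If some node $x$ has no outgoing arc, I take $S=B_x$. Every component of $G-S$ lies inside some $V_{x\to y}$ with $|V_{x\to y}|\le n/2$, which gives the claim.

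It remains to show that such a node exists. This is the main step. Suppose every node has an outgoing arc. Since $T$ is a finite tree, following these arcs must eventually use some edge $xy$ in both directions, so $|V_{x\to y}|>n/2$ and $|V_{y\to x}|>n/2$.

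I will show these two sets are disjoint, which is a contradiction because their sizes would sum to more than $n$. Suppose a vertex $w$ lies in both sets. Then $w$ appears in a bag of $T_y$ and in a bag of $T_x$. Because its bags form a connected subtree, that subtree must contain both $x$ and $y$. In particular $w\in B_x$, which contradicts $w\in V_{x\to y}$.

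Hence some node has out-degree zero, and its bag $S$ has size at most $t+1$ with every component of $G-S$ of size at most $n/2$. The only delicate points are the connectivity properties of the decomposition used in the two places above, and they are routine.
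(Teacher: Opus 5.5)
Your proof is correct. The heavy-side orientation, the forced doubly oriented edge (a walk in a finite tree that never immediately backtracks is a path, so it cannot continue forever), and the disjointness of $V_{x\to y}$ and $V_{y\to x}$ via connectivity of each vertex's subtree of bags are all sound. The paper gives no proof of this lemma and only cites Robertson and Seymour, and your sink argument is essentially the standard proof of that cited fact.
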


\begin{theorem}\label{thm:treewidth-color-bound}
Let $G$ be an $n$-vertex graph, and suppose that a width-$t$ tree
decomposition of $G$ is given. Then
\[
    \chiStable(G)
    \le
    (t+1)\left(\left\lceil
        \log_2\frac{n}{t+1}\right\rceil+1\right).
\]
Given a preference profile, a stable coloring using at most this many
colors can be constructed in polynomial time.
\end{theorem}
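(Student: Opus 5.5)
We bound $\mathrm{reach}(G)$ and then apply \Cref{thm:reachability}. The orientation comes from recursive balanced separators. Let $f(n)$ be the worst-case value of $\mathrm{reach}$ over graphs on at most $n$ vertices that admit a width-$t$ tree decomposition. The class is closed under taking subgraphs, since restricting every bag to a vertex subset preserves a tree decomposition of the same or smaller width. We aim to show $f(n)\le (t+1)(\lceil\log_2(n/(t+1))\rceil+1)$ by induction on $n$.

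\textbf{Construction.} If $n\le t+1$, orient $G$ along any linear order. Each vertex then reaches at most $n\le t+1$ vertices, which matches the bound, since the logarithmic term is $\le 0$ and the ceiling-plus-one gives at least $1$; a small case check may be needed for $n/(t+1)\le 1$. Otherwise, apply \Cref{lem:treewidth-balanced-bag} to obtain a bag $S$ with $|S|\le t+1$ such that every component $C$ of $G-S$ has at most $n/2$ vertices.
\begin{itemize}
\item Recursively orient each $G[C]$, using the restricted tree decomposition.
\item Orient every edge between $C$ and $S$ from $C$ towards $S$.
\item Orient the edges inside $S$ along a fixed linear order of $S$.
\end{itemize}
This orientation is acyclic: no edge leaves $S$, the edges inside $S$ follow a linear order, and there are no edges between distinct components.

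\textbf{Counting.} A vertex of $S$ reaches only vertices of $S$, so at most $t+1$. A vertex $v\in C$ reaches vertices of $C$ plus vertices of $S$. Once a directed path enters $S$ it stays there, and it cannot reach other components. The set reachable from $v$ inside $C$ is exactly its reachable set in the recursive orientation of $G[C]$, because paths cannot leave and re-enter $C$. Hence
\[
    f(n)\le f(\lfloor n/2\rfloor)+t+1.
\]
Unrolling this recursion for $\lceil\log_2(n/(t+1))\rceil$ levels brings the component sizes down to at most $t+1$, giving the claimed bound. The care needed is in the arithmetic: one checks that $\lceil\log_2(\lfloor n/2\rfloor/(t+1))\rceil\le\lceil\log_2(n/(t+1))\rceil-1$ when $n>t+1$, which follows from $\log_2(n/2)=\log_2 n-1$.

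\textbf{Algorithm.} The construction is polynomial-time, given that a balanced bag can be found by scanning all $O(n)$ bags of the decomposition and computing the components of $G-S$ for each one. The recursion depth is logarithmic and the components at each level are disjoint, so the total work is polynomial. The proof of \Cref{thm:reachability} is itself a polynomial-time greedy procedure once $D$ is given. It needs only an upper bound $k$ on the reachability, with colors drawn from $[k]$, rather than the exact optimum. This is the point to state carefully: we apply the greedy procedure with our constructed $D$ and the stated bound as the palette, and the argument of \Cref{thm:reachability} goes through verbatim.

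The main obstacle is verifying that the reachable sets compose additively as claimed, namely that reachability from $v\in C$ in the combined orientation equals its reachability within $G[C]$ plus at most $|S|$ more vertices. This holds because every cross edge points into $S$ and $S$ has no outgoing edges to $G-S$.
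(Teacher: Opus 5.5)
Your proposal is correct and follows essentially the same argument as the paper: take the balanced bag $S$ from \Cref{lem:treewidth-balanced-bag}, orient the components of $G-S$ recursively, direct every edge between $G-S$ and $S$ into $S$, bound the reachable set of a vertex in a component $H$ by $q(|V(H)|)+|S|$, and run the greedy procedure of \Cref{thm:reachability} on the constructed orientation. The one slip is your base-case remark that ``the ceiling-plus-one gives at least $1$'': it fails for $m\le (t+1)/2$, where $\lceil\log_2(m/(t+1))\rceil+1\le 0$, so the uncapped formula is not a valid induction hypothesis for small components; the paper avoids this with $q(m)=(t+1)\bigl(\max\{0,\lceil\log_2(m/(t+1))\rceil\}+1\bigr)$, and your unrolled recurrence, which bottoms out at $f(m)\le m\le t+1$, already handles it correctly.
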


\begin{proof}
Let $b=t+1$. For every positive integer $m$, define
\[
    q(m)=b\left(
        \max\left\{0,\left\lceil\log_2\frac{m}{b}\right\rceil\right\}
        +1
    \right).
\]
We prove by induction on $m$ that every $m$-vertex graph admitting a tree decomposition of width at most $t$ has an acyclic edge orientation in which at most $q(m)$ vertices are reachable from any vertex.
If $m\le b$, any acyclic orientation has this property because $q(m)=b\ge m$.

Let $m>b$, and let $F$ be such an $m$-vertex graph together with a tree decomposition of width at most $t$. Apply \Cref{lem:treewidth-balanced-bag} and let $S$ be the resulting bag.
Every component $H$ of $F-S$ has at most $m/2$ vertices and inherits a
tree decomposition of width at most $t$ by restricting the given bags
to $V(H)$. Write $\ell=\lceil\log_2(m/b)\rceil$. If $|V(H)|\le b$, then
$q(|V(H)|)=b\le b\ell$.
Otherwise,
\[
    \left\lceil\log_2\frac{|V(H)|}{b}\right\rceil
    \le
    \left\lceil\log_2\frac{m}{b}\right\rceil-1
    =\ell-1,
\]
so again $q(|V(H)|)\le b\ell=q(m)-b$.

Orient each component as given by the induction hypothesis, orient every
edge between $F-S$ and $S$ toward $S$, and orient the edges of $F[S]$
acyclically. Each component of $F-S$ and the graph $F[S]$ are oriented
acyclically, and no edge is directed from $S$ to $F-S$. Hence, the
resulting orientation is acyclic. A vertex in $S$ reaches only vertices
in $S$, so it reaches at most $|S|\le b\le q(m)$ vertices. A vertex in
a component $H$ reaches only vertices in $H\cup S$, so it reaches at
most $q(|V(H)|)+|S|\le q(m)$ vertices. This proves the induction claim.

Apply the claim to $G$ with $m=n$. Since $b\le n$, the value $q(n)$
equals the bound in the statement, and \Cref{thm:reachability} gives
the stated bound. At each recursive call, a suitable bag can be found
by checking the bags of the given decomposition. Hence, the orientation
and the corresponding stable coloring can be constructed in polynomial
time.
\end{proof}

We complement the upper bound with the following lower bound.
\begin{proposition}\label{prop:treewidth-tight}
For all integers $s\ge 1$ and $r\ge 2$, there is a graph $G_{s,r}$ on
$n=s2^{r-1}$ vertices with $\tw(G_{s,r})=2s-1$ and
$sr\le\chiStable(G_{s,r})\le 2s(r-1)$.
\end{proposition}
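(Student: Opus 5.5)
The plan is to take $G_{s,r}$ to be a \emph{blow-up} of a binomial tree. Let $T_r$ be the binomial tree on $2^{r-1}$ vertices. It is defined recursively: $T_1$ is a single vertex, and $T_{r}$ is obtained from two copies of $T_{r-1}$ by joining their roots. Define $G_{s,r}=T_r[K_s]$: replace each vertex $x$ of $T_r$ by a clique $Q_x$ of size $s$, and for every edge $xy$ of $T_r$ join all of $Q_x$ to all of $Q_y$. Then $n=s2^{r-1}$ immediately.

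\emph{Treewidth.} Take a tree decomposition of $T_r$ of width $1$, whose bags are the edges $\{x,y\}$. Replacing each bag by $Q_x\cup Q_y$ gives a tree decomposition of $G_{s,r}$ of width $2s-1$. Conversely, $r\ge 2$ means $T_r$ has an edge $xy$. Then $Q_x\cup Q_y$ is a clique $K_{2s}$, so $\tw(G_{s,r})\ge 2s-1$.

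\emph{Lower bound.} I use \Cref{thm:grundy-lower}, so it suffices to exhibit a Grundy coloring of $G_{s,r}$ with $sr$ colors.
\begin{enumerate}
\item First, by induction on $r$, $T_r$ has a Grundy coloring $\gamma$ in which the root gets color $r$. In $T_r$, one copy of $T_{r-1}$ keeps its coloring, so its root has color $r-1$. In the other copy, the root is recolored $r$. It keeps all of its neighbors of colors $1,\dots,r-2$ from its own copy, and it gains the neighbor of color $r-1$. All other vertices are unaffected.
\item Now color $Q_x$ with the block $\{s(\gamma(x)-1)+1,\dots,s\gamma(x)\}$, using each color once. This is proper, since adjacent tree vertices get different $\gamma$-values.
\item Check the Grundy condition. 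Take a vertex of color $s(i-1)+k$ in $Q_x$, so $\gamma(x)=i$. The smaller colors in its own block appear on $Q_x$. Each full block $j<i$ appears on $Q_y$ for a neighbor $y$ of $x$ with $\gamma(y)=j$, and such a $y$ exists because $\gamma$ is Grundy.
\end{enumerate}
Hence $\Gamma(G_{s,r})\ge sr$, and so $\chiStable(G_{s,r})\ge sr$.

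\emph{Upper bound.} Apply \Cref{thm:treewidth-color-bound} with $t=2s-1$, so $t+1=2s$. The bound is
\[
2s\left(\left\lceil\log_2\frac{s2^{r-1}}{2s}\right\rceil+1\right)=2s\bigl((r-2)+1\bigr)=2s(r-1).
\]

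The only step needing care is the Grundy coloring of the blow-up, in particular checking that block $j$ is fully present in the neighborhood. This is routine once the binomial-tree Grundy coloring is set up correctly, since every color in block $j$ appears on the single clique $Q_y$.
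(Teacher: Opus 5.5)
Your proposal is correct and takes essentially the paper's route: the same blow-up of the binomial tree by $s$-cliques, the same block-wise Grundy coloring with $sr$ colors, the same tree decomposition from edge bags together with the $K_{2s}$ lower bound on treewidth, and the same evaluation of \Cref{thm:treewidth-color-bound}. Your recursive definition of the binomial tree (two copies of $T_{r-1}$ joined at their roots) is isomorphic to the paper's $B_r$ (a root joined to copies of $B_1,\dots,B_{r-1}$), and you spell out the inductive Grundy step that the paper only calls ``inductive.''
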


\begin{proof}
Let $B_1$ be the one-vertex rooted tree. For $i\ge 2$, take one disjoint
copy of each of $B_1,\ldots,B_{i-1}$, add a new root, and join it to the
root of every copy. Then $B_r$ is a tree on $2^{r-1}$ vertices.
Inductively, it has a Grundy coloring $\gamma$ with
$\gamma(V(B_r))=[r]$ in which its root has color $r$.

Define $G_{s,r}$ by replacing each vertex $v$ of $B_r$ by a clique
$X_v$ of size $s$ and adding all edges between $X_u$ and $X_v$ whenever
$uv\in E(B_r)$. Assign the colors
$(\gamma(v)-1)s+1,\ldots,\gamma(v)s$ bijectively to $X_v$. Let
$\gamma(v)=i$ and $a\in[s]$, and consider the vertex in $X_v$ of color
$(i-1)s+a$. Within $X_v$, it has a neighbor of
each smaller color in the same block. For every $j<i$, the Grundy
property gives a neighbor $u$ of $v$ with $\gamma(u)=j$, and $X_u$
contains all $s$ colors in the $j$th block. The resulting coloring is
therefore a Grundy coloring with $sr$ colors, so
\Cref{thm:grundy-lower} gives $\chiStable(G_{s,r})\ge sr$.

At the root $v$ of $B_r$, use the bag $X_v$. For every non-root vertex
$v$ with parent $u$, use the bag $X_v\cup X_u$. These bags, indexed by
$V(B_r)$, form a tree decomposition of width $2s-1$.
Conversely, $G_{s,r}$ contains a clique of size $2s$ corresponding to
every edge of $B_r$, so its treewidth is exactly $2s-1$.
The upper bound follows from \Cref{thm:treewidth-color-bound}, since
$n/(2s)=2^{r-2}$.
\end{proof}

The ratio between the upper and lower bounds in
\Cref{prop:treewidth-tight} is $2-2/r<2$. This shows that the bound is asymptotically tight for trees.

\begin{corollary}\label{cor:treewidth-tree-lower}
For a tree $T$ on $n$ vertices, $\chiStable(T) = O( \log n)$. Furthermore, for each $n$, there exists a tree $T$ of $n$ vertices such that $\chiStable(T)=\Omega(\log n)$.
\end{corollary}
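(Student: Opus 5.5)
The upper bound follows directly from \Cref{thm:treewidth-color-bound}. A tree $T$ on $n\ge 2$ vertices has a tree decomposition of width $t=1$: take one bag $\{u,v\}$ for each edge $uv$ and connect the bags along the tree structure. This decomposition can be computed trivially. Substituting $t+1=2$ gives
\[
\chiStable(T)\le 2\left(\left\lceil\log_2\tfrac n2\right\rceil+1\right)=O(\log n).
\]
The case $n=1$ is immediate, since then $\chiStable(T)=1$.

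For the lower bound we use the binomial-type trees $B_r$ from the proof of \Cref{prop:treewidth-tight}, which is the case $s=1$ of that construction. $B_r$ is a tree on $2^{r-1}$ vertices with a Grundy coloring using $r$ colors, so \Cref{thm:grundy-lower} gives $\chiStable(B_r)\ge\Gamma(B_r)\ge r$.

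Given an arbitrary $n\ge 2$, choose $r=\lfloor\log_2 n\rfloor+1\ge 2$, so that $2^{r-1}\le n$. Build a tree $T$ on exactly $n$ vertices by attaching a path of $n-2^{r-1}$ new vertices to any vertex of $B_r$. Then $T$ contains $B_r$ as a subgraph, and \Cref{prop:tcf-monotone} yields
\[
\chiStable(T)\ge\chiStable(B_r)\ge r\ge\log_2 n=\Omega(\log n).
\]
For $n=1$ the bound holds trivially, since $\chiStable(T)=1$.

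No step is a real obstacle. The only points that need care are that the padded graph $T$ is still a tree, which holds because we attach a pendant path, and the boundary cases of small $n$, which are absorbed into the constants hidden in the $O(\cdot)$ and $\Omega(\cdot)$ notation.
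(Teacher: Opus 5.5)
Your proof is correct and follows the paper's argument. The upper bound is \Cref{thm:treewidth-color-bound} with $t=1$. The lower bound pads $B_r$, where $2^{r-1}\le n<2^r$, to $n$ vertices and combines \Cref{thm:grundy-lower} with \Cref{prop:tcf-monotone}; attaching a pendant path instead of the paper's leaves at the root makes no difference. The one loosely stated step is the width-$1$ decomposition: rather than ``one bag per edge,'' root $T$ at $\rho$ and use $B_\rho=\{\rho\}$ and $B_v=\{v,\mathrm{parent}(v)\}$, arranged along $T$ itself, so that the bags containing any vertex form a star.
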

\begin{proof}
    For $n=2^{r-1}$, consider the tree $B_r$ above.
    For general $n$, take $r$ such that $2^{r-1} \le n< 2^r$ and attach $n-2^{r-1}$ new leaves to the root of $B_r$ to make a tree $T$ of $n$ vertices. By \Cref{prop:tcf-monotone}, $\chiStable(T) \ge \chiStable(B_r) = \Omega( \log n)$.
\end{proof}

\subsection{Planar Graphs}\label{subsec:planar}

By the Four Color Theorem~\citep{robertsonEtAl1997four}, every planar graph $G$ satisfies $\chi(G)\le 4$.
However, since trees are planar, \Cref{cor:treewidth-tree-lower} gives an $\Omega(\log n)$ lower bound for planar graphs.

Recursively applying the planar separator theorem \citep{lipton1979separator} shows that every $n$-vertex planar graph has treewidth $O(\sqrt n)$. Thus, \Cref{thm:treewidth-color-bound} gives $\chiStable(G)=O(\sqrt{n}\log n)$.
In fact, we show that $\chiStable(G)$ is polylogarithmic in the number of vertices.

\begin{theorem}\label{thm:planar-color-bound}
For every planar graph $G$ on $n$ vertices,
$\chiStable(G)=O(\log^2 n)$.
\end{theorem}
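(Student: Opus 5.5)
We will prove the bound via \Cref{thm:reachability} by constructing an acyclic orientation of $G$ with $\mathrm{reach}(G)=O(\log^2 n)$. Recursing with the planar separator theorem (separator of size $O(\sqrt n)$) and orienting every edge toward the separator only gives $O(\sqrt n)$. Such a separator is too large, because every vertex of a piece reaches the whole separator. The recursion must therefore separate using a few \emph{paths}, not an arbitrary vertex set. A path can be oriented so that each vertex reaches only $O(\log)$ of its vertices.

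We use the Lipton--Tarjan-type path separator. Every connected planar graph on $m$ vertices has a set of at most three (or $O(1)$) paths $P_1,P_2,P_3$, namely two root paths of a BFS tree plus one non-tree edge. Removing these paths leaves components of at most $m/2$ vertices. Paths in a BFS tree are shortest paths, but we do not even need that; we only need that each $P_i$ is a path in $G$.

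\textbf{Key step: orienting the separator paths.} We orient the edges inside $Q=P_1\cup P_2\cup P_3$ acyclically so that every vertex reaches only $O(\log n)$ vertices of $G[Q]$. The difficulty is that $G[Q]$ has chords beyond the path edges. Our approach is the following.
\begin{enumerate}[label=(\roman*)]
\item Along each path, we use the same binary-recursive ("ruler") orientation as for trees. The midpoint of a path is the highest level, and every edge points toward the vertex of higher level. A vertex then reaches along the path only $O(\log |P|)$ vertices, at most one per level on each side.
\item We handle chords with a cleaner recursive scheme. We process the graph recursively with the separator $Q$ and assign each vertex a \emph{depth} equal to the recursion level at which it enters a separator, with $O(\log n)$ levels. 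Within the separator at one level, we split each path at its middle vertex and recurse.
\end{enumerate}
Concretely, we put all middle vertices of the three paths in a separator $S$ of size $3$. We then use \Cref{lem:treewidth-balanced-bag}-style recursion, but on \emph{path-pieces}. We define a rank function in which a vertex's rank is (recursion level of the planar separation, recursion level of halving within its path), ordered lexicographically with separators of coarser levels ranked higher. Every edge is oriented toward the endpoint of higher rank, ties being impossible by construction.

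\textbf{Counting reachable vertices.} A vertex reaches only higher-ranked vertices, and each reached vertex is adjacent to an already reachable one. We aim to show that for each planar level $\ell$ and each halving level $h$, at most $O(1)$ vertices of that rank class are reachable. This gives $O(\log n)\cdot O(\log n)=O(\log^2 n)$. The expected argument is that the reachable set at a finer rank is confined to a subpath between two already-reached higher-rank breakpoints, as in the path case. However, a chord could jump to another subpath, and this is the main obstacle.

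The fix we would try first uses the BFS/shortest-path property to control chords. A vertex on a shortest path has neighbors on that path only within distance $1$. Hence chords within a single separator path only connect vertices at distance at most $2$ along it, so each $P_i$ together with its chords has bounded "bandwidth". A bounded-bandwidth graph on a line admits an ordering by halving levels in which each rank class contributes $O(1)$ reachable vertices. Edges between different $P_i$ only add a constant factor, since there are $3$ paths. If the shortest-path property is unavailable, we would instead fall back on replacing halving by balanced separators of $G[Q]$, which has treewidth $O(1)$ after contracting. In either case, each of the $O(\log n)$ planar levels contributes $O(\log n)$ reachable vertices, and \Cref{thm:reachability} yields $\chiStable(G)=O(\log^2 n)$.
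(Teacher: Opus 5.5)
\textbf{There is a genuine gap.} Your proposal does not yet prove the reachability bound. The counting step is stated only as an aim, and the obstacle you single out (chords inside the separator $Q$) is not what breaks it.

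In your rank scheme, every edge from a piece to the separator of an ancestor piece is oriented toward that separator. So the part of an ancestor separator path reached from $v$ is a union over \emph{all} entry points: all neighbors on that path of all finer-level vertices that $v$ reaches. The path argument (``confined between two already-reached breakpoints'') concerns a single start vertex on a path and does not control this union.

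Several specific claims also fail:
\begin{itemize}
\item Your remark that the shortest-path property is not needed is false. Take a fan: a path $P$ plus an apex $u$ adjacent to all of $P$. Then $P$ is a valid separator, $u$ is the only remaining piece, and $u$ reaches all $n-1$ vertices of $P$.
\item Your fallback fails too. For arbitrary paths, $G[Q]$ need not have bounded treewidth: a Hamiltonian path of a $k\times k$ grid induces the whole grid. Even bounded treewidth of $G[Q]$ would not control entry points from the pieces.
\item The per-class $O(1)$ bound is only asserted.
\item Ties are not impossible, since adjacent vertices on two different separator paths can have the same halving level. This is easily fixed by breaking ties by path index.
\end{itemize}

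\textbf{The missing idea.} What you need is a global distance argument, and for it shortestness is essential. Its relevant consequence is not bounded bandwidth of $Q$ (a shortest path is induced, so it has no chords at all). It is that vertices close to $v$ occupy only a short window of each separator path. Concretely:
\begin{itemize}
\item Separate each piece $C$ by $O(1)$ root paths of a BFS tree of $C$ with constant balance (Lipton--Tarjan).
\item Rank vertices by (planar level, path index, parity of depth), assigning shared vertices to one path. No edge joins two vertices of equal rank, because same-parity vertices of one shortest path are non-adjacent. Hence every directed path has at most $L=O(\log n)$ vertices.
\item Every vertex reachable from $v$ lies in the separator of one of the pieces $C_0\supset C_1\supset\cdots$ containing $v$.
\item A vertex $x$ of the separator of $C_j$ is reached by a directed path all of whose vertices have level at least $j$, and hence lie in $C_j$. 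Thus $d_{C_j}(v,x)<L$.
\item Each separator path of $C_j$ is a shortest path \emph{in $C_j$}, so it meets this ball in at most $2L+1$ vertices.
\item Summing over $O(1)$ paths and $O(\log n)$ levels gives $O(\log^2 n)$.
\end{itemize}
With your halving ranks there are $\Theta(\log n)$ ranks per level, so rank counting gives only $L=O(\log^2 n)$, and this argument would yield only $O(\log^3 n)$.

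\textbf{Comparison with the paper.} Repaired this way, your route is a genuinely more elementary alternative to the paper's proof. The paper applies the product structure theorem $G\subseteq P\boxtimes H$ with $\tw(H)\le 8$. It orients $H$-edges by the treewidth recursion (reach $r=O(\log n)$) and $P$-edges alternately, and bounds the reach by $r(4r-1)$. Both proofs rest on the same mechanism: short directed paths confined to an $O(\log n)$ window along a path.
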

\begin{proof}
If $n=1$, then one color suffices, so assume that $n\ge 2$.
For graphs $A$ and $B$, the strong product $A\boxtimes B$ has vertex set $V(A)\times V(B)$. Two distinct vertices $(a,b)$ and $(a',b')$ are adjacent if and only if $a=a'$ or $aa'\in E(A)$, and $b=b'$ or $bb'\in E(B)$.

By the planar product structure theorem~\citep{dujmovicEtAl2020queue}, $G$ is a subgraph of $P\boxtimes H$ for a path $P$ and a graph $H$ of treewidth at most $8$. Identify $G$ with its image in this product, and
let $X$ be the set of second ($H$) coordinates used by its vertices. Replacing $H$ by $H[X]$ preserves the inclusion $G\subseteq P\boxtimes H$ and ensures that $|V(H)|\le n$ and $\tw(H)\le 8$.

Apply the construction in the proof of \Cref{thm:treewidth-color-bound} to $H$. It gives an acyclic orientation $D_H$ in which every vertex reaches at most $r$ vertices, including itself, for some $r=O(\log n)$. Write $P=p_1\ldots p_\ell$. We define an orientation $D$ of $P\boxtimes H$. For an edge $(p_i,h)(p_j,h')$ with $h\ne h'$, orient it from $(p_i,h)$ to $(p_j,h')$ exactly when $D_H$ orients $hh'$ from $h$ to $h'$. For an edge with $h=h'$, orient it from the endpoint with an even-indexed first ($P$) coordinate to the endpoint with an odd-indexed first ($P$) coordinate.

We show that $D$ is acyclic and that the vertices reachable from any fixed vertex use $O(\log n)$ distinct values in each coordinate.

Fix a topological ordering $\pi\colon V(H)\to[|V(H)|]$ of $D_H$. If an arc of $D$ goes from $(p_i,h)$ to $(p_j,h')$ with $h\ne h'$, then $\pi(h)<\pi(h')$. Hence, no directed cycle in $D$ can change its second coordinate. Within each fixed second coordinate, every arc is directed from an even-indexed first coordinate to an odd-indexed one, so no directed cycle exists.
Thus, $D$ is acyclic.

Consider a directed path in $D$ starting at $(p_i,h)$.
We claim that the second coordinate changes at most $r-1$ times.
Deleting consecutive repetitions from its sequence of second coordinates, we obtain a directed walk in $D_H$. Since $D_H$ is acyclic, no vertex occurs twice in this walk, and hence it is a directed path.
The second coordinate therefore changes no more than $r-1$ times.

Each maximal subpath with a fixed second coordinate contains at most one edge because the alternating orientation of $P$ has no directed path of length two. There are at most $r$ such subpaths, so the original path has length at most $(r-1)+r=2r-1$. Every edge of the strong product changes the first coordinate by distance at most one in $P$. Thus, the final first coordinate is at distance at most $2r-1$ from $p_i$.

The second coordinate of every vertex reachable from $(p_i,h)$ is one of the at most $r$ vertices reachable from $h$ in $D_H$. Its first coordinate is one of the at most $4r-1$ vertices of $P$ at distance at most $2r-1$ from $p_i$. Hence, at most $r(4r-1)=O(\log^2 n)$ vertices are reachable from any vertex of $P\boxtimes H$. Hence, by \Cref{thm:reachability} and \Cref{prop:tcf-monotone}, $\chiStable(G) \le \chiStable(P\boxtimes H)=O(\log^2 n)$.
\end{proof}

The $\Omega(\log n)$ lower bound leaves a gap of one logarithmic factor.

\section{Algorithms and Complexity}\label{sec:algorithms}

In this section, we consider the complexity of finding a stable coloring with the minimum number of colors for a given instance $(G, \succ)$.

\subsection{Complexity for a Fixed Number of Colors}

We first prove a complexity dichotomy for every fixed number of colors. For an integer $k$, let \emph{stable $k$-colorability} to be the decision problem that, whether a given instance admits a stable $k$-coloring.

\begin{theorem}\label{thm:general-complexities}
Stable $2$-colorability is solvable in polynomial time. In this case, a stable coloring can also be constructed in polynomial time. 
For every fixed integer $k\ge 3$, stable $k$-colorability is NP-complete.
\end{theorem}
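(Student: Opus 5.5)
The plan treats the two parts of \Cref{thm:general-complexities} separately. Both rest on one observation: an envy arc can only run along an edge of $G$. Hence the envy graph of a coloring is the disjoint union of the envy graphs of its restrictions to the connected components of $G$, and a coloring is stable if and only if its restriction to every component is stable.

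\emph{Stable $2$-colorability.} A stable $2$-coloring is in particular a proper $2$-coloring, so we first test whether $G$ is bipartite by breadth-first search and reject if it is not. If $G$ is bipartite, each connected component $C$ with bipartition $(X_C,Y_C)$ has exactly two proper $2$-colorings: $X_C\mapsto 1,\,Y_C\mapsto 2$ and the swapped one. By the observation above, the choices for different components do not interact. For each component we therefore build the envy graph of each of its two colorings. This takes time linear in $|E(C)|$ after the constant-time preference comparisons from the preprocessing. We then test acyclicity of each envy graph by a topological sort. If for every component at least one of the two colorings has an acyclic envy graph, combining these choices gives a stable $2$-coloring. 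Otherwise, some component has no stable $2$-coloring and we reject. The whole procedure runs in polynomial, indeed linear, time and constructs the coloring when one exists.

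\emph{NP-completeness for fixed $k\ge 3$.} Membership in NP is immediate: a certificate is a map $\phi:V\to[k]$. We check properness edge by edge, build $D_\phi$ using the rankings listed on $[k]$, and test $D_\phi$ for a directed cycle, all in polynomial time.

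For hardness, we reduce from ordinary $k$-colorability, which is NP-complete for every fixed $k\ge 3$. Given a graph $G$, we give every vertex the same ranking $1\succ_v 2\succ_v\cdots\succ_v k$, extended arbitrarily to $\mathbb{N}$. Under these preferences, an envy arc $(u,v)$ exists only if $\phi(v)<\phi(u)$. Thus the color strictly decreases along every arc of $D_\phi$, so $D_\phi$ is acyclic for every proper coloring $\phi$. Consequently $(G,\succ)$ has a stable $k$-coloring if and only if $G$ has a proper $k$-coloring. The reduction is clearly polynomial, which completes the proof.

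I do not expect a real obstacle. The one point needing care is the decomposition into components in the $2$-color case: the two colorings of one component cannot be combined with the other component's choice in a way that creates a blocking cycle across components, because every arc of $D_\phi$ lies inside a single component.
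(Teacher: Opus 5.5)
Your proposal is correct and follows the paper's proof essentially step for step. For $k=2$ you test the two proper $2$-colorings of each bipartite component independently. For $k\ge 3$ you reduce from ordinary $k$-colorability using the common ranking $1\succ 2\succ\cdots\succ k$, under which colors strictly decrease along every envy arc.
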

\begin{proof}
We first consider stable 2-colorability.
If $G$ is not bipartite, then it has no proper $2$-coloring and hence no stable $2$-coloring.
Assume that $G$ is bipartite. For each connected component $C$ of $G$, fix a bipartition $(L_C,R_C)$. Up to swapping the two colors, there are only two proper $2$-colorings of $C$: one assigning color $1$ to $L_C$ and color $2$ to $R_C$, and the other assigning color $2$ to $L_C$ and color $1$ to $R_C$.

For each component, test both colorings by constructing their envy graphs and checking acyclicity. If neither envy graph is acyclic, reject. Otherwise, choose a coloring with an acyclic envy graph. Since there are no edges between distinct components, the union of the chosen colorings is stable. The algorithm runs in polynomial time.

Next, fix $k \ge 3$. We will consider stable $k$-colorability.
The problem is in NP because we can check properness, construct the envy digraph, and test acyclicity in polynomial time.

For hardness, we reduce from ordinary $k$-colorability, which is NP-complete for every fixed $k\ge3$ \citep{garey1979computers}.
Given a graph $G$, give every vertex the same ranking on $[k]$,
\[
    1\succ 2\succ \cdots\succ k,
\]
and extend the ranking beyond $[k]$ arbitrarily.
We claim that $G$ has a $k$-coloring if and only if $(G,\succ)$ has a stable $k$-coloring. Let $\phi$ be a $k$-coloring of $G$. Every arc $(u,v)$ in the envy graph satisfies $\phi(v)<\phi(u)$. Thus, a directed cycle $v_1v_2\ldots v_tv_1$ would imply
\[
    \phi(v_1)>\phi(v_2)>\cdots>\phi(v_t)>\phi(v_1),
\]
a contradiction. Hence, $\phi$ is stable. Conversely, every stable $k$-coloring of $(G,\succ)$ is a proper $k$-coloring of $G$ by definition.
\end{proof}

\subsection{FPT Algorithm Parameterized by Treewidth}
Given the computational hardness in general, we now design an FPT algorithm that finds a stable coloring using the minimum possible number of colors, parameterized by treewidth.
Recall that an algorithm is FPT with respect to a parameter $t$ if it runs in time $f(t)N^{O(1)}$ on instances of size $N$, for some computable function $f$.

For an $n$-vertex graph $G$ of treewidth $t$, the 2-approximation algorithm of Korhonen computes a tree decomposition of width at most $2t+1$ in $2^{O(t)}n$ time \citep{korhonen2021single}.

We use a rooted nice tree decomposition with empty root and leaf bags and four types of internal nodes: introduce-vertex, introduce-edge, forget-vertex, and join nodes.
An introduce-vertex node $x$ with child $y$ satisfies $B_x=B_y\cup\{v\}$. A forget-vertex node $x$ with child $y$ satisfies $B_x=B_y\setminus\{v\}$. An introduce-edge node $x$ with child $y$ satisfies $B_x=B_y$ and introduces one edge whose endpoints both belong to the bag.
A join node $x$ has two children $y,z$ with $B_x=B_y=B_z$.
Every edge is introduced exactly once, before either endpoint is forgotten.
A width-$w$ tree decomposition can be converted in polynomial time into a nice tree decomposition of the same width with $O((w+1)n)$ nodes \citep{cygan2015parameterized}.

We first give a dynamic program parameterized by the decomposition width and the number of colors. For a graph of treewidth $t\ge 1$, \Cref{thm:treewidth-color-bound} shows that $O(t\log n)$ colors suffice. Together with Korhonen's 2-approximation, this establishes fixed-parameter tractability with respect to treewidth.

\begin{proposition}\label{prop:treewidth-dp}
Given a width-$t$ tree decomposition of $G$, one can decide whether a stable coloring instance on $G$ admits a stable $k$-coloring in time $k^{t+1}2^{O(t^2)}n^{O(1)}$. If the answer is positive, the algorithm also constructs such a coloring.
\end{proposition}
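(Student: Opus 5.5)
We run a standard dynamic program over a nice tree decomposition of width $t$ with $O(tn)$ nodes, obtained in polynomial time as described above. The key point is that acyclicity of the envy graph can be tracked through a bag by remembering only a reachability relation among bag vertices.

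\textbf{States.} For a node $x$, let $G_x$ be the subgraph formed by the vertices in bags of the subtree rooted at $x$, together with the edges introduced in that subtree. A state at $x$ is a pair $(c,R)$, where $c\colon B_x\to[k]$ and $R\subseteq B_x\times B_x$ is a relation. The state is feasible if there is a proper $k$-coloring $\phi$ of $G_x$ that extends $c$, whose envy graph $D_\phi$ on $G_x$ is acyclic, and such that $R$ is exactly the set of pairs $(u,v)$ with $u\ne v$ for which $v$ is reachable from $u$ in $D_\phi$. There are at most $k^{t+1}2^{(t+1)^2}$ states per node, which gives the claimed running time once each transition is shown to take time polynomial in the number of states. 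We store one witness (or back-pointers) per feasible state so that a coloring can be reconstructed at the end.

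\textbf{Transitions.}
\begin{itemize}
\item \emph{Leaf and introduce-vertex nodes.} A leaf has the single state given by the empty map and empty relation. Introducing $v$ adds an arbitrary color for $v$ and leaves $R$ unchanged, since $v$ is isolated in $G_x$.
\item \emph{Introduce-edge $uv$.} Reject if $c(u)=c(v)$. Otherwise determine the envy arcs on $uv$ from the preferences: zero, one, or two arcs. If there are two arcs, reject. If there is one arc $(u,v)$, reject when $(v,u)\in R$, since that would close a cycle. Otherwise replace $R$ by its closure after adding the arc: add all pairs $(a,b)$ with ($a=u$ or $(a,u)\in R$) and ($b=v$ or $(v,b)\in R$). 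This is correct because any new walk in $G_x$ using the new arc decomposes into old paths whose endpoints lie in the bag.
\item \emph{Forget $v$.} Restrict $c$ and $R$ to $B_x\setminus\{v\}$. Paths through $v$ are already recorded as pairs between the remaining bag vertices, because $R$ is transitively closed on the bag. Restricting therefore loses no reachability information among the remaining vertices.
\item \emph{Join.} The colorings must agree on the bag. Take the transitive closure of $R_y\cup R_z$ and reject if it contains a pair $(a,a)$ or both $(a,b)$ and $(b,a)$.
\end{itemize}

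\textbf{Main obstacle.} The step I expect to be hardest is justifying the forget and join steps: that $R$ restricted to the bag carries all the information needed for future cycle detection. The argument is that every path in the final envy graph decomposes into segments lying inside $G_y$ or inside $G_z$, or using later edges, and the endpoints of these segments lie in separator bags. This uses the separation property of tree decompositions: no edge of the final graph joins $V(G_x)\setminus B_x$ to vertices outside $V(G_x)$. Any directed cycle in the whole envy graph is then witnessed at the first node where its last edge is introduced, or at a join where its two halves meet. The witness is a pair $(a,a)$ in the closure, or a closed walk through bag vertices, which is exactly what the rejection tests detect.

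\textbf{Answer.} At the empty root bag, the answer is positive exactly when some state is feasible, and back-pointers reconstruct a stable $k$-coloring.
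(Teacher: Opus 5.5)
Your proposal is correct and follows the paper's proof essentially step for step: the same states $(\alpha,R)$ with $R$ the reachability relation on the bag, and the same leaf, introduce-vertex, introduce-edge, forget and join transitions. Your explicit closure formula at introduce-edge nodes is just a concrete way of computing the reachability relation of $R\cup A_{uv}$. Two details the paper makes explicit that you gloss over: at a join node only pairs of child states with the same coloring are combined, so the work is $k^{t+1}2^{2(t+1)^2}$ rather than merely polynomial in the number of states (which could give $k^{2(t+1)}$), and the case $k\ge n$ is settled directly by the construction of \Cref{thm:reachability}, since rankings are only given on $[\min\{k,n\}]$.
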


\begin{proof}
If $k\ge n$, the construction in the proof of \Cref{thm:reachability} gives a stable coloring using colors from $[n]\subseteq[k]$ in polynomial time. Hence, assume that $k<n$.
Fix a nice tree decomposition.
For a node $x$, let $V_x$ be the union of all bags in the subtree rooted at $x$, let $E_x$ be the set of edges introduced in this subtree, and write $G_x=(V_x,E_x)$.
For a coloring $\varphi$ of $G_x$, let $D_x^\varphi$ be its envy graph with respect to the edges in $E_x$.

For each node $x$, we compute a Boolean table $T_x$ indexed by pairs $(\alpha,R)$, where $\alpha\colon B_x\to [k]$ and $R\subseteq B_x\times B_x$.
Here and below, $(v,v)$ is not included in a reachability relation $R$ unless the digraph contains a directed cycle through $v$.
We set $T_x[\alpha,R]$ to true if and only if there is a proper coloring $\varphi\colon V_x\to[k]$ extending $\alpha$ such that the digraph $D_x^\varphi$ is acyclic and, for every $u,v\in B_x$, the pair $(u,v)$ belongs to $R$ exactly when $D_x^\varphi$ contains a directed path from $u$ to $v$.

We process the nodes bottom-up.
A leaf table has one true entry, indexed by the empty assignment and the empty relation.
At an introduce-vertex node for $v$, for every true child entry, we extend its assignment by each color $c \in [k]$ for $v$ without adding any ordered pair involving $v$ to the relation.
At an introduce-edge node for $uv$, we discard every child entry with $\alpha(u)=\alpha(v)$.
We then add the envy arcs determined by the preferences of $u$ and $v$:
\[
    A_{uv}
    =
    \{(u,v):\alpha(v)\succ_u\alpha(u)\}
    \cup
    \{(v,u):\alpha(u)\succ_v\alpha(v)\}.
\]
Let $R'$ be the reachability relation of the directed graph $(B_x,R\cup A_{uv})$. That is, for $a,b\in B_x$, we have $(a,b)\in R'$ if and only if $a$ reaches $b$ in this graph.
If $(a,a)\notin R'$ for every $a\in B_x$, we set $T_x[\alpha,R']$ to true.
At a forget-vertex node for $v$, for every true child entry, we set the entry obtained by restricting both $\alpha$ and $R$ to the remaining bag to true.
At a join node $x$ with children $y$ and $z$, we consider pairs of true entries $T_y[\alpha,R_1]$ and $T_z[\alpha,R_2]$ with the same assignment $\alpha$. For every such pair, let $R$ be the reachability relation of the directed graph $(B_x,R_1\cup R_2)$. If $(a,a)\notin R$ for every $a\in B_x$, we set $T_x[\alpha,R]$ to true.

We prove by induction that the true entries of each table are exactly those satisfying the invariants above.
A leaf has an empty bag and no introduced edges, so the empty coloring witnesses its unique true entry.
At an introduce-vertex node, the new vertex is incident with no introduced edge. Assigning it any color therefore creates no envy arc and leaves the reachability relation unchanged.
At an introduce-edge node, $A_{uv}$ is exactly the set of arcs added to the child envy graph. Since $R$ records all child reachability between bag vertices, $R'$ is precisely the reachability relation induced on $B_x$ by the resulting envy graph. The child envy graph is acyclic, and every new arc has both endpoints in $B_x$. Hence, the resulting envy graph contains a directed cycle exactly when $(a,a)\in R'$ for some $a\in B_x$.
At a forget-vertex node, restricting $R$ to the new bag preserves reachability between its vertices, including reachability witnessed by a path containing the forgotten vertex as an internal vertex.
At a join node, the two child envy graphs intersect exactly in the common bag and have disjoint arc sets (since edges are introduced only once). Every directed path in their union whose endpoints lie in $B_x$ decomposes into subpaths that alternate between the two child envy graphs, with the endpoints of each subpath lying in the common bag. Thus, the endpoints of the original path belong to the reachability relation of $(B_x,R_1\cup R_2)$. Conversely, concatenating the corresponding child paths realizes every pair in this relation. Therefore, every transition sets precisely the feasible table entries to true.
The entry $T_r[\emptyset,\emptyset]$ at the root $r$ is true if and only if the full instance admits a stable $k$-coloring.

Each bag has at most $t+1$ vertices. Thus, each table has at most $k^{t+1}2^{(t+1)^2}$ entries. Every non-join transition takes polynomial time per entry. At a join node, for each assignment $\alpha$, we consider at most $2^{2(t+1)^2}$ pairs $(R_1,R_2)$, and the number of $\alpha$ to consider is at most $k^{t+1}$. Since the nice decomposition has $O((t+1)n)$ nodes, the total running time is $k^{t+1}2^{O(t^2)}n^{O(1)}$. Keeping one predecessor for every true entry allows us to reconstruct a coloring.
\end{proof}

\begin{theorem}\label{thm:treewidth-fpt}
For every stable coloring instance on an $n$-vertex graph $G$ of
treewidth $t$, a stable coloring using the minimum possible number of
colors can be found in time $2^{O(t^2)}n^{O(1)}$. Consequently, stable
$k$-colorability is fixed-parameter tractable with respect to $t$.
\end{theorem}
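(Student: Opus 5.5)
Edge cases come first. If $t=0$, then $G$ has no edges, so one color suffices and a stable $1$-coloring is trivial: there are no envy arcs. So assume $t\ge 1$.

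Run Korhonen's algorithm to obtain a tree decomposition of width $w\le 2t+1$ in $2^{O(t)}n$ time. By \Cref{thm:treewidth-color-bound} applied to this decomposition, $\chiStable(G,\succ)\le K:=(w+1)(\lceil\log_2(n/(w+1))\rceil+1)=O(t\log n)$, and a stable $K$-coloring is built in polynomial time. Hence the optimum $k^*$ lies in $[K]$. For each $k=1,2,\ldots,K$ in increasing order, call \Cref{prop:treewidth-dp} with this width-$w$ decomposition, stop at the first $k$ for which a stable $k$-coloring exists, and output the coloring it reconstructs. This coloring is optimal by construction.

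The only real work is the running time. Each call costs $k^{w+1}2^{O(w^2)}n^{O(1)}\le K^{2t+2}2^{O(t^2)}n^{O(1)}$, and there are at most $K\le n$ calls. The main obstacle is that $K^{2t+2}=(O(t\log n))^{O(t)}$ is not obviously of the form $f(t)n^{O(1)}$. I would handle it with the standard bound $(\log n)^{c t}\le 2^{O(t^2)}+n^{O(1)}$, split into two cases:
\begin{itemize}
\item If $\log n\le 2^{t}$, then $(\log n)^{ct}\le 2^{ct^2}$.
\item Otherwise $t<\log\log n$, so $(\log n)^{ct}=2^{ct\log\log n}\le 2^{c(\log\log n)^2}$, which is $n^{o(1)}$.
\end{itemize}
Also $t^{O(t)}=2^{O(t\log t)}\subseteq 2^{O(t^2)}$. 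Combining these gives total time $2^{O(t^2)}n^{O(1)}$.

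For the consequence about stable $k$-colorability with arbitrary $k$: if $k\ge K$, the answer is yes, since the stable $K$-coloring from \Cref{thm:treewidth-color-bound} uses colors in $[K]\subseteq[k]$. Otherwise one DP call with $k<K$ decides the instance within the same time bound, which gives fixed-parameter tractability with respect to $t$.
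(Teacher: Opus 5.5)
Your proposal is correct and follows the paper's proof: Korhonen's $2$-approximation, the color bound of \Cref{thm:treewidth-color-bound}, and then the DP of \Cref{prop:treewidth-dp} for $k=1,2,\ldots$ in increasing order. The only differences are that you bound $K^{w+1}$ by the case split on $\log n\le 2^t$ instead of the paper's inequality $(2+\log_2 n)^r\le (3r)^r n$ with $r=w+1$, and one harmless slip: $K\le n$ can fail (e.g.\ when $1<n/(w+1)<2$, where $K=2(w+1)>n$), so cap the loop at $\min\{n,K\}$ as the paper does, or note that it stops at the optimum, which is at most $n$ by \Cref{cor:finite}.
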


\begin{proof}
If $t=0$, then $G$ is edgeless, so one color is optimal and can be found in polynomial time.
Suppose that $t\ge 1$.
A tree decomposition of width $w\le 2t+1$ can be found in $2^{O(t)}n$ time \citep{korhonen2021single} and converted into the nice form used in \Cref{prop:treewidth-dp}.
Set
\[
    q=\min\left\{
        n,\,
        (w+1)\left(\left\lceil
        \log_2\frac{n}{w+1}\right\rceil+1\right)
    \right\}.
\]
By \Cref{cor:finite} and \Cref{thm:treewidth-color-bound}, every preference profile admits a stable $q$-coloring.
We run the algorithm of \Cref{prop:treewidth-dp} for $k=1,\ldots,q$ in increasing order and return the first coloring found, which therefore uses the minimum possible number of colors.

For each $k\le q$, \Cref{prop:treewidth-dp} takes time at most $q^{w+1}2^{O(w^2)}n^{O(1)}$. We have $q\le (w+1)(2+\log_2 n)$. For $r=w+1$ and $n\ge 2$, the inequality $(2+\log_2 n)^r\le (3r)^r n$ gives $q^{w+1}\le (3r^2)^r n=2^{O(w\log(w+2))}n$. Since $w\le 2t+1$, each call takes $2^{O(t^2)}n^{O(1)}$ time.
There are at most $q\le n$ calls, so the total running time remains $2^{O(t^2)}n^{O(1)}$.
\end{proof}

\begin{corollary}
    For an instance $(T, \succ)$ where $T$ is a tree, a stable coloring using the minimum possible number of colors can be found in polynomial time.
\end{corollary}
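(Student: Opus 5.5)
This corollary follows from \Cref{thm:treewidth-fpt} by specialising to trees, where the treewidth is a constant. The plan is to treat the edgeless case separately, observe that every other tree has treewidth exactly $1$, and then read the running time off the theorem.

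If $T$ has no edges, then $T$ is a single vertex. One color is optimal, and the vertex simply receives its favorite color; the envy graph is empty, so this coloring is stable.

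Otherwise $T$ has at least one edge, so $\tw(T)=1$. An explicit width-$1$ tree decomposition can be written down directly rather than via Korhonen's algorithm. Root $T$ at any vertex $\rho$. Use the bag $\{\rho\}$ at the root and the bag $\{v,p(v)\}$ at every non-root vertex $v$, where $p(v)$ is the parent of $v$. These bags are indexed by $V(T)$, with tree structure inherited from $T$. This is the same construction as in the proof of \Cref{prop:treewidth-tight}, specialised to $s=1$.

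We then apply the algorithm of \Cref{thm:treewidth-fpt} with $t=1$. Its running time is $2^{O(t^2)}n^{O(1)}=n^{O(1)}$, and it returns a stable coloring using the minimum possible number of colors. The paper's convention measures time in the combined input size including the preference lists on $[n]$. That convention only affects polynomial factors, so the overall bound remains polynomial.

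It is also worth checking the number of color counts tried. Here $q=O(\log n)$ by \Cref{thm:treewidth-color-bound}, and each call to \Cref{prop:treewidth-dp} costs $k^{2}2^{O(1)}n^{O(1)}$. Hence the total running time is polynomial.

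The only point that needs care is that the $2^{O(t^2)}$ factor is a constant once $t$ is fixed. No genuine obstacle arises, because treewidth $1$ gives constant bag sizes, so every table in the dynamic program has polynomially many entries.
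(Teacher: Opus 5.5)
Your proposal is correct and follows the paper's approach: the corollary comes directly from \Cref{thm:treewidth-fpt}, since trees have treewidth at most $1$ and so $2^{O(t^2)}n^{O(1)}$ becomes polynomial. The explicit width-$1$ decomposition and the separate single-vertex case are harmless additions, except that in the single-vertex case you should assign color $1$ rather than the vertex's favorite color, so that the result is a $1$-coloring in the paper's sense ($\phi(V)\subseteq[1]$).
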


\section{Conclusion}
We introduced stable graph coloring in which vertices possess preferences over the possible colors. We established bounds for several graph classes, and gave an FPT algorithm that finds a stable coloring with the minimum possible number of colors, parameterized by treewidth. Determining the stable chromatic number of planar graphs remains open. Studying other graph classes, such as interval graphs, will also be fruitful.

\paragraph{Acknowledgments}
\texttt{ChatGPT-5.6-Sol} established the connection between the planar product structure theorem and reachability bound, and found the instance where $\chiStable(G) > \Delta(G) + 1$.

\bibliography{stable_coloring}

\appendix
\section{An Instance where $\chiStable(G) > \Delta + 1$}

For graphs $A$ and $B$, the \emph{Cartesian product} $A \square B$ has vertex set $V(A)\times V(B)$. Two distinct vertices $(a,b)$ and $(a',b')$ are adjacent if and only if [$a=a'$ and $bb'\in E(B)$] or [$aa'\in E(A)$ and $b=b'$].

\begin{proposition}
Let $P=C_3\square K_2$ be the triangular prism graph. Then
\[
    \chi_{\mathrm{stable}}(P)>\Delta(P)+1.
\]
\end{proposition}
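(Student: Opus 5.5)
The plan is to show $\chiStable(P)\ge 5$ by exhibiting one explicit preference profile $\succ$ on $P$ that admits no stable $4$-coloring. Since $\Delta(P)=3$, this gives $\chiStable(P)>\Delta(P)+1$.

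\Cref{thm:grundy-lower} alone cannot give this bound. Write the two triangles as $a_1a_2a_3$ and $b_1b_2b_3$, with rungs $a_ib_i$. The Grundy coloring $b_2=a_3=1$, $b_3=a_2=2$, $a_1=3$, $b_1=4$ shows $\Gamma(P)\ge 4$. Since $P$ is $3$-regular, also $\Gamma(P)\le 4$, so the Grundy bound stops at $4$. We must therefore go one step further by hand.

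\textbf{Choosing the profile.} Start from the profile used in the proof of \Cref{thm:grundy-lower} for the above Grundy coloring, now restricted to $[4]$ instead of $[3]$. Then modify the rankings of the vertices so that colour $4$ becomes a trap. The idea is to make the top-Grundy vertex and some of its neighbours rank colour $4$ low. The rankings should be chosen so that the induction ``a vertex of Grundy class $i$ receives colour $\ge i$'' still forces conflicts along rungs. The symmetry of the prism (swap the triangles, rotate them) should be exploited so that only a few ranking patterns need to be fixed. Concretely, I would try giving the two triangles cyclically shifted rankings, e.g.\ $a_i$ ranks $i\succ i+1\succ i+2\succ 4$ (indices mod $3$), and $b_i$ uses a reversed cyclic pattern. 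The intent is that each rung has opposing preferences, and each triangle carries a latent $3$-cycle of envy.

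\textbf{Verification.} Fix a proper $4$-coloring $\phi$. Each triangle uses three distinct colours, so it misses exactly one colour of $[4]$. Split into cases according to the missing colour on each triangle, giving at most $4\times 4$ cases, further reduced by symmetry. Within each case, there are $3!$ placements on each triangle, constrained by the three rungs. In each case, exhibit a blocking cycle. This will be either a blocking $2$-cycle on a rung or triangle edge, or a $3$-cycle around a triangle. It could also be a $4$-cycle $a_ia_jb_jb_i$ through two rungs.

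\textbf{Main obstacle.} The hard step is finding a profile that actually works; the case analysis is routine once it does. If the cyclic profile above fails, I would fall back to a direct search. There are only $4^6=4096$ colour assignments, and far fewer proper ones. Given a candidate profile, check every proper $4$-coloring for a cycle in its envy graph. I would then present the verified profile with a case table organised by the colours missing from the two triangles.
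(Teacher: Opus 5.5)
Your reduction is right: since $\Delta(P)=3$, it suffices to exhibit one profile with no stable $4$-coloring. Your observation that $\Gamma(P)=4$ also correctly shows that the Grundy bound cannot do this. But the proposal never produces such a profile or proves that one works, and that is the entire content of the statement. ``Pick a profile, then do a case analysis or a computer search'' is a plan, not a proof.

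Moreover, the one concrete candidate you suggest fails. Take $a_i\colon i\succ i+1\succ i+2\succ 4$ and $b_i\colon i\succ i-1\succ i-2\succ 4$ (indices mod $3$), and the coloring $\phi(a_i)=i$, $\phi(b_1)=4$, $\phi(b_2)=1$, $\phi(b_3)=2$. It is proper. Each $a_i$ holds its top color, so it has out-degree $0$. Nobody envies $b_1$, because every neighbor ranks $4$ last. Finally, $b_3$ ranks $2\succ 1$, so $b_2b_3b_2$ is not blocking. Hence the envy graph is acyclic. If instead ``reversed'' means $b_i$ reverses $a_i$'s order on $[3]$, then $\phi(a_i)=i$, $\phi(b_i)=i+2$ gives every vertex its favorite color, which is even a stable $3$-coloring. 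Your heuristic of ranking $4$ low everywhere points the wrong way. It makes any vertex colored $4$ a source, while distinct favorites on a triangle let that whole triangle become sinks.

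What is missing is a mechanism that forces the coloring of each triangle. The paper's key observation is the following. Suppose the orders on a triangle are such that, for every $3$-set $S\subseteq[4]$, the three vertices' favorites within $S$ are pairwise distinct. Then a stable coloring of that triangle using $S$ must give every vertex its favorite in $S$. To see this, suppose $v$ does not get its favorite. The vertex $w$ whose favorite in $S$ is $\phi(v)$ envies $v$, and $v$ envies the holder of its own favorite. If that holder is $w$, this is a $2$-cycle. Otherwise it is the third vertex $u$, who is not at its favorite and therefore envies $v$ or $w$, closing a $2$- or $3$-cycle.

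Since restrictions of stable colorings are stable, your case split by the missing color then collapses: each triangle is colored by one of only four forced vectors. Giving corresponding vertices \emph{identical} preferences makes both triangles draw from the same list. It then only remains to ensure that any two vectors in the list agree in some coordinate, which forces a monochromatic rung. The paper uses $1\succ2\succ3\succ4$, $3\succ2\succ1\succ4$, $4\succ2\succ1\succ3$ on $a_1,a_2,a_3$, and the same on $b_1,b_2,b_3$. This yields the forced vectors $(1,3,2),(1,2,4),(1,3,4),(2,3,4)$, which pairwise agree in some coordinate. Note that here color $4$ is the top choice of one vertex per triangle, not a trap.
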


\begin{proof}
Let the two triangles of $P$ be
\[
    abc
    \qquad\text{and}\qquad
    a'b'c',
\]
with matching edges $aa'$, $bb'$, and $cc'$. Since $\Delta(P)=3$, it suffices to prove that $\chi_{\mathrm{stable}}(P) \ge 5$.

We construct a preference profile for which no stable $4$-coloring exists. Give corresponding vertices in the two triangles the same preferences:
\[
\begin{aligned}
    1 &\succ_a 2 \succ_a 3 \succ_a 4,
    &\qquad
    1 &\succ_{a'} 2 \succ_{a'} 3 \succ_{a'} 4,\\
    3 &\succ_b 2 \succ_b 1 \succ_b 4,
    &
    3 &\succ_{b'} 2 \succ_{b'} 1 \succ_{b'} 4,\\
    4 &\succ_c 2 \succ_c 1 \succ_c 3,
    &
    4 &\succ_{c'} 2 \succ_{c'} 1 \succ_{c'} 3.
\end{aligned}
\]

We first determine the possible restriction of a stable $4$-coloring to one of the triangles. Let $S\subseteq[4]$ be the set of three colors used on $abc$. For each possible set $S$, the favorite colors of $a,b,c$ within $S$ are as follows:
\[
\begin{array}{c|c}
S
&
\bigl(
    \operatorname{top}_a(S),
    \operatorname{top}_b(S),
    \operatorname{top}_c(S)
\bigr)
\\ \hline
\{1,2,3\} & (1,3,2)\\
\{1,2,4\} & (1,2,4)\\
\{1,3,4\} & (1,3,4)\\
\{2,3,4\} & (2,3,4)
\end{array}
\tag{$\ast$}
\]

In every row, the three favorite colors are distinct. We claim that a stable coloring of the triangle using the color set $S$ must assign every vertex its favorite color within $S$.

Indeed, a coloring in which each vertex receives its favorite color on the triangle is stable. Let $\phi$ be a proper coloring of the triangle such that not all vertices receive its favorite color. Let $S$ be the three colors used and $v$ be one vertex that $\phi(v) \ne \operatorname{top}_v(S)$. Then, $(\ast)$ shows that one of the neighbors of $v$ has $\phi(v)$ as its favorite color. Call it $w$. If $\phi(w) = \operatorname{top}_v(S)$, then we have a blocking cycle $vwv$. If $\phi(w) \ne \operatorname{top}_v(S)$, then for the remaining vertex $u \in \{a,b,c\} \setminus \{v,w\}$, it must be that $\phi(u) = \operatorname{top}_v(S)$. Since the favorite colors are all distinct, $\phi(u) \ne \operatorname{top}_u(S)$, and hence there is an envy arc from $u$ to either $v$ or $w$. In either case, we have a blocking cycle $vuv$ or $vuwv$, respectively. Hence, the claim holds.

Consequently, the restriction of any stable $4$-coloring to $abc$ must be one of the four color vectors in~$(\ast)$. The same conclusion holds for $a'b'c'$, because corresponding vertices have identical preferences.

However, any two vectors in~$(\ast)$ agree in at least one coordinate. Therefore, the color vectors assigned to $(a,b,c)$ and $(a',b',c')$ agree at $a,a'$, at $b,b'$, or at $c,c'$, contradicting properness.
Hence this preference profile admits no stable $4$-coloring, and therefore $\chi_{\mathrm{stable}}(P)\geq 5$.
\end{proof}

\end{document}